\colorlet{shadecolor}{yellow}
\newtheorem{theorem}{Theorem}
\newtheorem{corollary}{Corollary}
\begin{document}
\title{Power Control and Frequency Band Selection Policies for Underlay MIMO Cognitive Radio}

%
\author{Shailesh Chaudhari, \textit{Student Member, IEEE},  Danijela Cabric, \textit{Senior Member, IEEE}%
\thanks{Shailesh Chaudhari and Danijela Cabric are with the Department of Electrical and Computer Engineering, University of California, Los Angeles, 56-125B Engineering IV Building, Los Angeles, CA 90095-1594, USA (email: schaudhari@ucla.edu, danijela@ee.ucla.edu).}
\thanks{This work has been supported by the National Science Foundation under CNS grant 1149981.}
}
%
%
%


\maketitle

\begin{abstract}
We study power control and frequency band selection policies for multi-band underlay MIMO cognitive radio with the objective of maximizing the rate of a secondary user (SU) link while limiting the {interference leakage} towards primary users (PUs) below a threshold. The goal of the SU in each policy is to select one frequency band in each time slot and determine the transmit power. To limit the interference towards PU in time-varying channels, we propose fixed and dynamic transmit power control schemes which depend on PU traffic and the temporal correlation of channels between the SU and the PU. We study the performance of frequency band selection policies that use fixed or dynamic power control. We show that dynamic frequency band selection policies, e.g., policies based on multi-armed bandit framework, wherein SU selects a different frequency band in each slot, result in higher interference towards PU as compared to the fixed band policy wherein SU stays on one band. We also provide an expression for the gap between the rate achieved by SU under a clairvoyant policy and the fixed band policy. It is observed that this gap reduces with increased temporal correlation and with increased number of SU antennas.
\end{abstract}

\IEEEpeerreviewmaketitle

\begin{IEEEkeywords}
Band selection, MIMO, power control, temporal correlation, underlay cognitive radio.
\end{IEEEkeywords}


\section{Introduction}
\label{sec:Introduction}
Due to ever increasing usage of mobile devices and data hungry applications, it has become necessary to improve the spectral efficiency of wireless networks. The spectral efficiency can be improved by allowing co-existence of unlicensed secondary users (SUs) with licensed primary users (PUs) in the same frequency band. Cognitive radio (CR) networks allow such co-existence under two paradigms: {interweave} and underlay \cite{biglieri2012a, tanab2017}. In an {interweave} CR network, the SU can transmit only in \textit{empty} time-slots when PUs are inactive in order to avoid interfering with the them. The achievable rate of the SU is further improved if there are multiple frequency bands available for transmission. {In a multi-band {interweave} CR network}, the SU can maximize its achievable rate by predicting which frequency band will have an empty time slot and then tunning to that band for transmission. Thus, the SU can dynamically hop to a different frequency band in each time-slot in search of an empty time slot to maximize its rate. This prediction-based band hopping is achieved by the multi-armed bandit (MAB) framework \cite{zhao2008a, tekin2011, liu2013a, dai2014, ouyang2014, oksanen2015, wang2016a, maghsudi2016, raj2018}. In the MAB framework, the SU learns the on-off activity of PUs in different bands in order to predict which band (arm) will be empty in the next time slot. However, the achievable rate of the SU in the {interweave} network is limited by the PU activity since the probability finding an empty slot is low when the PU activity is high. Further, a costly RF front-end is required at the SU to hop to a different band in each slot.

The achievable rate of SU can be improved if it is allowed to transmit even when the PU is active. The underlay CR paradigm allows the SU to transmit concurrently with PUs as long as the the interference towards primary receiver is below a specified limit \cite{biglieri2012a}. 
The SU can transmit concurrently with the PU, if the SU is equipped with multiple antennas and employs beamforming techniques to steer its signal in the null space of channels to primary receiver in order to contain the interference \cite{tsinos2013, noam2013}. The null space to primary receiver is estimated using the received auto-covariance matrix at the SU during a previous slot when receiver was the transmitter \cite{tsinos2013, gao2010, yi2009, yi2010}. Since the channel between SU and PU evolves due to fading during these time slots, the SU cannot not perfectly eliminate the interference towards the primary receiver using only null steering. Therefore, transmit power control is required along with null steering to limit the interference. The transmit power from the SU depends on the time between transmit and receive modes of PUs, i.e., the link reversal time of the PU link. In other words, the power transmitted from SU depends on the traffic pattern of the PU transmitter-receiver link as well as the temporal correlation that determines the rate of channel fading. Such transmit power control has not been considered in underlay MIMO CR literature and is addressed in this work. In an underlay CR network, the rate of the SU link depends on the transmit power as well as beamforming gain achieved after null steering. Therefore, {for a multi-band underlay CR network}, the band selection policy needs to take into account transmit power, beamforming gain and PU traffic statistics in each frequency band.
		
\subsection{Related Work}
Frequency band selection using MAB based prediction has been considered for interweave cognitive radio in \cite{zhao2008a, tekin2011, liu2013a, dai2014, ouyang2014, oksanen2015, wang2016a}. In these works, the problem is cast as a restless MAB where each frequency band is modeled as an independent arm of the bandit problem. The term \textit{restless} implies that the physical channels in each band keep evolving even when that band is not selected by the SU, which holds for wireless channels. The goal of the band selection policies using restless MAB is to maximize the expected rate at the SU. Since the optimal solution to a general restless MAB problem is intractable \cite{zhao2008a, ouyang2014}, most of the works consider special cases. The special cases include policies based on a binary channel model as well as myopic policies where the goal is to maximize immediate rate in the next time slot. In the works \cite{zhao2008a, tekin2011, dai2014, oksanen2015}, a binary channel model was considered, where the SU receives reward (rate) 0 if the selected band is occupied by the PU, otherwise it receives rate 1. This model is suitable in the interweave CR network where SU transmits only when PU is inactive. For the binary channel model, the optimality of myopic band selection policy was shown in \cite{zhao2008a} for two frequency bands under the condition that the channel state evolves independently from one slot to the next. An online learning based band selection was proposed in \cite{dai2014} that implements the myopic policy in \cite{zhao2008a} without prior knowledge of PU activity. The work in \cite{tekin2011} considered a more general case where the state of the binary channel is modeled as a Markov chain (Gilbert-Elliot model). The authors proposed regenerative cycle algorithm (RCA) that outperforms the selection scheme in \cite{zhao2008a}. A recency based band selection policy was introduced in \cite{oksanen2015}, where the SU selects a suboptimal band less frequently and thus provides better performance as compared to earlier policies in a binary channel model with independent or Markovian evolution.

Binary channel models are not suitable for band selection in underlay CR network where the SU can receive a non-zero rate even when PU is active in the selected band. The rate received in this case depends on the beamforming gain between secondary transmitter and receiver as well as the transmit power. In order to model the beamforming gain, a multi-state channel model is required. Multi-state channels are considered for restless MAB problems in \cite{ouyang2014, wang2016a, liu2013a}. The optimality of myopic policy is established in \cite{ouyang2014} for a multi-state channel under the condition that the rate received by the SU in different channel states is sufficiently separated. This condition, however, may not hold in a real world channel with continuous state space. The work in \cite{wang2016a} established the optimality of the myopic policy when $F-1$ out of $F$ channels are selected by SU in each time slot. A policy, called deterministic sequencing of exploration and exploitation (DSEE) was constructed in \cite{liu2013a}. Under this policy, the SU stays on one band for multiple consecutive slots, called epochs, and the epoch length grows geometrically. It has been shown that the DSEE outperforms RCA for multi-state channels.

In an underlay CR network, if the transmit power is known, then the frequency band selection policy can be constructed by aforementioned restless MAB approaches such as DSEE. However, the existing works do not consider transmit power control for such a band selection problem in underlay CR networks. 
\subsection{Summary of Contributions and Outline}
In this paper, we first propose fixed and dynamic power control schemes for a SU with multiple antennas. In the fixed power scheme, the SU transmits fixed power when the PU is active in that time slot, while in the dynamic power control scheme, the transmit power from the SU changes in each time slot. We show that the transmit power in a given frequency band depends on the traffic statistics of the PU transmitter-receiver links and the temporal correlation of the channels. 

Next, we analyze the following categories of band selection policies that use the above power control schemes: fixed band fixed power (FBFP), fixed band dynamic power (FBDP), dynamic band fixed power (DBFP) and clairvoyant policy. In the FBFP and FBDP policies, the SU stays on one frequency band and uses fixed or dynamic power control. The band selection policies based on restless MAB, such as DSEE, fall under DBFP category along with round robin and random band selection policies. The SU may hop to different frequency band under the DBFP policies. We also analyze the performance of a genie-aided clairvoyant policy that selects the frequency band providing the maximum gain in each slot. We compare the performance of these policies in terms of rate received at SU and interference towards PU.

The main contributions of this paper are summarized below.
\begin{enumerate}
	\item Expressions for transmit power are derived for fixed and dynamic power control schemes as functions of link reversal time of the PU transmitter-receiver link and temporal correlation of channels. It is observed that the transmit power and thus the rate of the SU increase as the PU link reversal time decreases. 
	\item We show that the dynamic power control policy provides higher rate to SU than the fixed power control, i.e., FBDP provides higher rate than FBFP. Both polices keep the interference leakage towards PU below the specified limit. It is also shown that the DBFP polices, such as round robin, random, and DSEE, cause higher interference to PUs as compared to the fixed band policies.
	\item The expression is derived for the gap between the rate achieved by an optimal genie-aided clairvoyant policy and the FBFP policy. It is shown that this gap reduces under slow-varying channels and as the number of SU antennas is increased. This implies that the SU does not loose significant amount of rate by staying on one frequency band.
\end{enumerate}

\textit{Outline:} This paper is organized as follows. The system model and problem statement are described in Section \ref{sec:Model_problem}. Power control and band selection policies are discussed in Section \ref{sec:policies}. Analytical comparison of the policies is presented in Section \ref{sec:analysis} while numerical results are shown in Section \ref{sec:results}. Finally, concluding remarks are provided in Section \ref{sec:Conclusion} and future extension is discussed in Section \ref{sec:extension}.

\textit{Notations:} We denote vectors by bold, lower-case letters, e.g., $\mathbf{h}$. Matrices are denoted by bold, upper case letters, e.g., $\mathbf{G}$. Scalars are denoted by non-bold letters e.g. $L$. Transpose, conjugate, and Hermitian of vectors and matrices are denoted by $(.)^T$, $(.)^*$, and $(.)^H$, respectively. The norm of a vector $\mathbf{h}$ is denoted by $||\mathbf{h}||$. $\Gamma(x)$ is the Gamma function, while $\gamma(M,x)$ is the incomplete Gamma function defined as $\int_{0}^{x}t^{M-1}e^{-t} dt$. $\mathbb{E}[.]$ denotes the expectation operator, while $\mathbb{E}_x[.]$ is the expectation with respect to random variable $x$.


\begin{figure}
	\centering
	\includegraphics[width=\columnwidth]{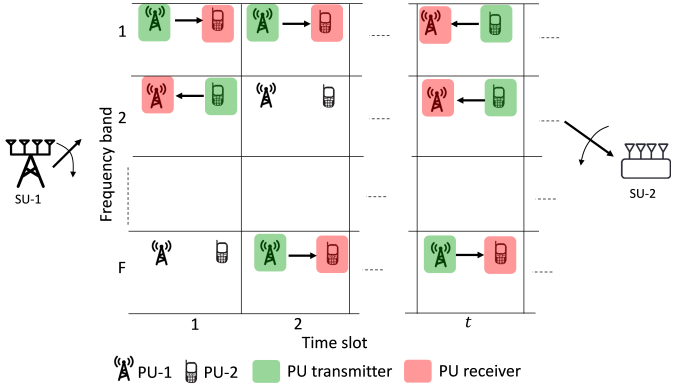}
	\vspace{-2mm}
	\caption{\small System model: SUs select any one out of $F$ available bands at time slot-$t$. Each band is occupied by a PU transmitter-receiver link.}
	\label{fig:system_model}
	\vspace{-6mm}
\end{figure}

\section{System Model and Problem Formulation}
\label{sec:Model_problem}
\subsection{System Model}
Consider an underlay CR network in which SU transmit-receive pair, SU-1 and SU-2, selects one out of $F$ available frequency bands as shown in Fig. \ref{fig:system_model}. Let $M_s$ be the number of antennas at SU-1 and SU-2. Each band is occupied by a pair of PUs as shown in Fig. \ref{fig:network_model}. Let $M_p (<M_s)$ be the number of antennas at PUs. The PU-1 and PU-2 in each pair change role from transmitter to receiver according to a Markov chain.

The PU link in frequency band $f$ is in one of the three states at time slot $t$: 1) state-0: both PUs are silent, 2) state-1: PU-1 is the transmitter and PU-2 is the receiver,  and 3) state-2: PU-1 is the receiver and PU-2 is the transmitter. The state of the PU link is denoted by $s_{f,t} \in \{0,1,2\}$. The transition between the states is determined by transition probability matrix $\mathbf{T}_{f}$ as shown below:

{\small \begin{align}
\mathbf{T}_{f} = 
\begin{bmatrix}
p_{00,f} & p_{01,f} &  p_{02,f} \\
p_{10,f}& p_{11,f}  & p_{12,f} \\
p_{20,f} & p_{21,f} &  p_{22,f} \\
\end{bmatrix},
\label{eq:TPF}
\end{align}}

\vspace{-1mm}
\noindent where $p_{kl,f} = \Pr(s_{f,t+1}=l|s_{f,t}=k), k,l \in \{0,1,2\}$ is the probability that PU link goes from state-$k$ in slot $t$ to state-$l$ in slot $t+1$. The steady state probability of PU link being in state $k$ in any slot $t$ is denoted by $\pi_{k,f} = \Pr(s_{f,t}=k), k \in \{0,1,2\}$ such that $\sum_k \pi_{k,f} = 1, \forall f$. The matrix $\mathbf{T}_f$ depends on the traffic configuration of the PU link. In order evaluate the policies, we will consider TDD LTE traffic models specified 3GPP 36.211 \cite{3gpp2017} to construct $\mathbf{T}_f$. Without the loss of generality, we consider the SU-1 is the transmitter and SU-2 is receiver in the secondary network.

Consider that the SU selects band $f$ in slot $t$ and the PU link is in state $s_{f,t}=1$. Then, the channel between SUs and PUs are shown in Fig. \ref{fig:network_model}: $\mathbf{H}_{f,t} \in \mathbb{C}^{M_s\times M_s}$ is the channel between SU-1 and SU-2, while $\mathbf{G}_{ij,f,t} \in \mathbb{C}^{M_s\times M_p}, i,j \in \{1,2\}$ denote channel between PU-$i$ and SU-$j$ in time slot $t$. We assume that the channels remain unchanged for the duration of time slot and evolve from slot $t$ to slot $t+1$ according to the Gauss-Markov model as follows \cite{sadeghi2008, so2015a}: 
\begin{align}
\mathbf{H}_{f,t+1} =\alpha_f \mathbf{H}_{f,t} +\sqrt{1-\alpha_f^2} \Delta\mathbf{H}_{f,t},
\\\mathbf{G}_{ij, f,t+1} =\alpha_f \mathbf{G}_{ij,f,t} +\sqrt{1-\alpha_f^2} \Delta\mathbf{G}_{ij,f,t}, 
\label{eq:gauss_markov}
\end{align}
where $\alpha_f = J_0(2\pi f_d T_{slot})$ is the temporal correlation coefficient, $J_0(.)$ is the 0th order Bessel function, $f_d$ is the Doppler rate, and $T_{slot}$ is the duration of slot.  The matrices $\Delta\mathbf{H}_{f,t}, \Delta\mathbf{G}_{ij,f,t} \sim \mathcal{CN}(0, \mathbf{I})$ are i.i.d. channel update matrices in slot $t$. We assume that the channels are reciprocal. Initial distributions of the MIMO channels are $\mathbf{H}_{f,0} \sim \mathcal{CN}(0, \mathbf{I}) $ and $\mathbf{G}_{f,ij,0} \sim \mathcal{CN}(0, \mathbf{I})$\footnote{We consider a normalized channel model with identity covariance matrix for each flat fading MIMO channel as also used in \cite{dey2014, ho2017}. Distance based path-loss is not modeled since it does not affect the null space of channels $\mathbf{G}_{ij,f,t}$.}.

\begin{figure}
	\centering
	\includegraphics[width=0.7\columnwidth]{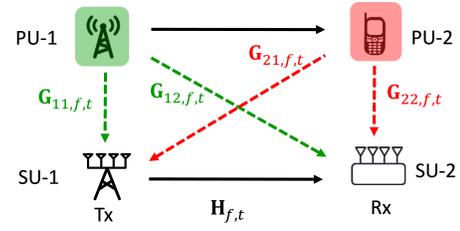}
	\vspace{-3mm}
	\caption{\small MIMO channels in frequency band-$f$ in slot $t$ with $s_{f,t}=1$. The channels in green are between PU transmitter and SUs, while the ones in red are between PU receiver and SUs.}
	\label{fig:network_model}
	\vspace{-3mm}
\end{figure}

\subsubsection{Null space computation}
\label{sec:null_space_computation}
The SU pair employs transceiver beamforming to transmit its signal in the null space  of channels to PUs. This ensures that the interference from SU transmitter to PU receiver and PU transmitter to SU receiver is minimized. The null space of the channels to PUs is obtained during the sensing duration $T_{sense}$ of each time slot. As shown in Fig. \ref{fig:slot_structure}, one time-slot consists of sensing duration $T_{sense}$ to obtain null spaces and SU data transmission $T_{data}$\footnote{The time slots in primary and secondary systems are assumed to be synchronized \cite{kaushik2016}.}. In the sensing duration, SU-1 and SU-2 receive the signal from the PU transmitter and compute the null space of the channel.

\begin{figure}
	\centering
	\includegraphics[width=0.7\columnwidth]{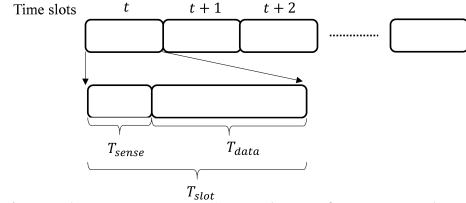}
	\vspace{-3mm}
	\caption{\small Time slot structure. SU selects frequency band $f$ at the beginning of slot and stays on that band for the duration $T_{slot}$.}
	\label{fig:slot_structure}
	\vspace{-3mm}
\end{figure}

Consider that SU selects band $f$ in slot $t$ when PU link state is $s_{f,t}=2$, i.e., PU-2 is the transmitter and PU-1 is the receiver. Let us assume that PU-1 was the transmitter $\tau_{f,t}$ slots ago, i.e., $s_{f,t-{\tau}_{f,t}}=1$. For simplicity of notations, we drop subscripts from $\tau_{f,t}$. As shown in Fig. \ref{fig:snt}, in slot $t-\tau$, SUs obtain the null space of the channels $\mathbf{G}_{11,f,t-\tau}$ and $\mathbf{G}_{12,f,t-\tau}$ using the received autocovariance matrices. Let $\mathbf{y}_{1,f,t-\tau}(n)$ and $\mathbf{y}_{2,f,t-\tau}(n)$ be the received signal vectors at SU-1 and SU-2 during sensing duration of slot $t-\tau$ as expressed below:
\begin{align}
\mathbf{y}_{1,f,t-\tau}(n) = \mathbf{G}_{11,f,t-\tau} \mathbf{x}_1(n) + \mathbf{w}(n), n = {0,2,\cdots, N-1},
\\ \mathbf{y}_{2,f,t-\tau}(n) = \mathbf{G}_{12,f,t-\tau} \mathbf{x}_1(n) + \mathbf{w}(n), n = {0,2,\cdots,N-1},
\end{align}
where $\mathbf{x}_1(n)\in \mathbb{C}^{M_p\times 1}$ is the transmitted signal vector from PU-1, $\mathbf{w}(n)\sim \mathcal{CN}(0, \mathbf{I})$ is the noise vector, and {$N=\frac{T_{sense}}{T_s}$ is the number of samples collected and $T_s$ is the sampling duration}. Let $\mathbf{A}_{1,f,t-\tau} \in \mathbb{C}^{M_s \times (M_s - M_p)}$ be the null space matrix of channel $\mathbf{G}_{11, f, t-\tau}$ and $\mathbf{B}_{1,f,t-\tau} \in \mathbb{C}^{M_s \times (M_s - M_p)}$ be the null space matrix of channel $\mathbf{G}_{12, f, t-\tau}$. Matrix $\mathbf{A}_{1,f,t-\tau}$ contains columns in the null space of the received covariance matrix $\mathbf{\hat{Q}}_{1,f,t-\tau} = \frac{1}{N}\sum\limits_{n=0}^{N-1} \mathbf{y}_{1,f,t-\tau}(n)\mathbf{y}^H_{1,f,t-\tau}(n)$ and are obtained by eigenvalue decomposition (EVD) of $\mathbf{\hat{Q}}_{1,f,t-\tau}$ at SU-1. Similarly,  matrix $\mathbf{B}_{1,f,t-\tau}$ contains columns in the null space of the covariance matrix $\mathbf{\hat{Q}}_{2,f,t-\tau} =\frac{1}{N} \sum\limits_{n=0}^{N-1} \mathbf{y}_{2,f,t-\tau}(n) \mathbf{y}^H_{2,f,t-\tau}(n)$ and are obtained by eigenvalue decomposition (EVD) of $\mathbf{\hat{Q}}_{2,f,t-\tau}$ at SU-2. Similarly, in slot $t$, SU-1 obtains null space $\mathbf{A}_{2,f,t}$ of channel matrix $\mathbf{G}_{21,f,t}$, while SU-2 obtains null space $\mathbf{B}_{2,f,t}$ of channel matrix $\mathbf{G}_{22,f,t}$. { Since the computation of null space requires EVD of a $M_s \times M_s$ matrix, it has computation complexity of $\mathcal{O}({M_s^3})$ \cite{arakawa2003}}. {SUs can estimate the state of PU link, $s_{f,t}$, based on the received signal from PUs in the sensing duration as described in Appendix \ref{app:st_est}.}

\begin{figure}[t!]
	\centering	
	\begin{subfigure}[b]{\columnwidth}
		\centering	
		\includegraphics[width=0.6\columnwidth]{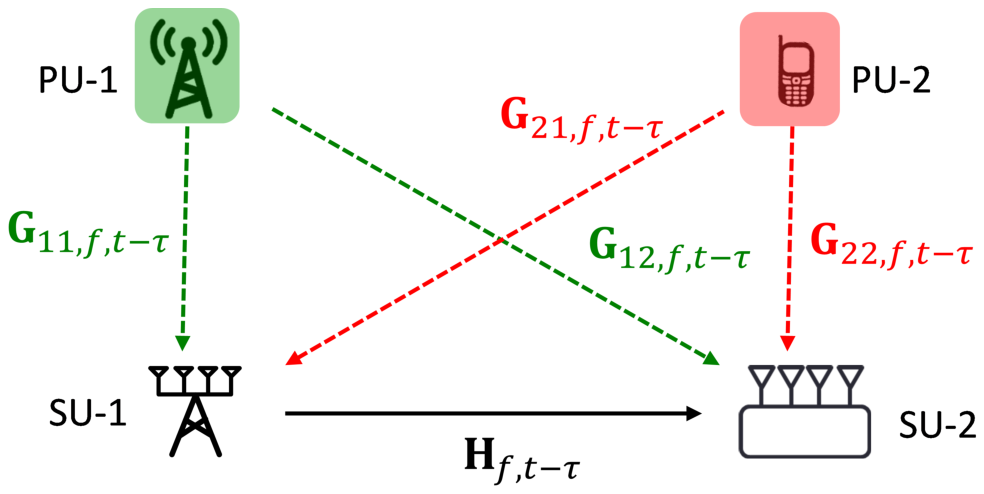}
		\caption{\small Slot $t-\tau$: ($s_{f,t-\tau}=1$) PU-1 is transmitter.}
		\label{fig:snt_t_tau}
	\end{subfigure}			
	\begin{subfigure}[b]{\columnwidth}
		\centering	
		\includegraphics[width=0.6\columnwidth]{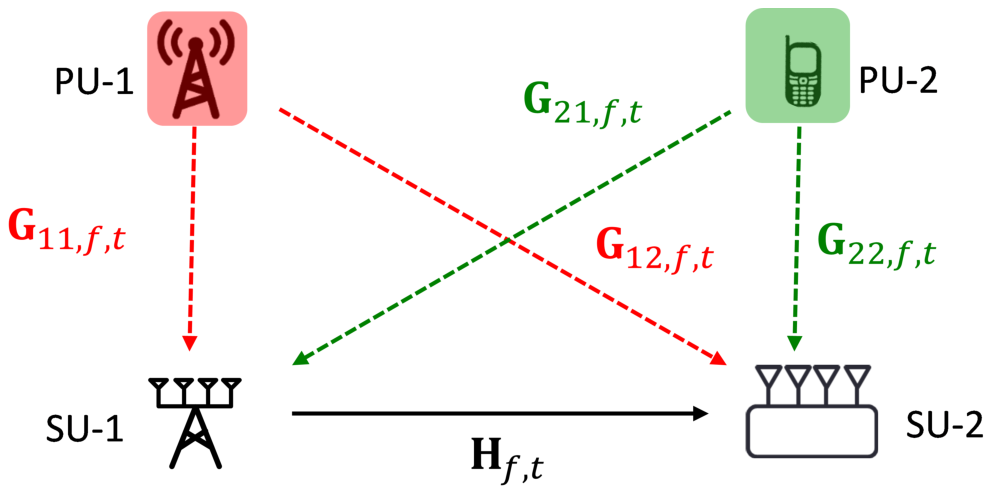}
		\caption{\small Slot $t$:  ($s_{f,t}=2$) PU-2 is transmitter.}
		\label{fig:snt_t}
	\end{subfigure}		
	\caption{\small Null space computation during sensing duration. SUs compute null space of channels shown in green by sensing the signal received from PU transmitter.}
	\label{fig:snt}	
	\vspace{-4mm}
\end{figure}

\subsubsection{Transceiver beamforming at SUs}
\label{sec:transceiver_beamforming}
Let us consider SU signal transmission from SU-1 to SU-2 in slot $t$. As shown in Fig. \ref{fig:snt_t}, in this slot, PU-2 is the transmitter and PU-1 is the receiver. In order to mitigate the interference towards PU receiver (PU-1), SU-1 needs to transmit its signal in the null space $\mathbf{A}_{1,f,t}$ of $\mathbf{G}_{11,f,t}$. However, this null space is not available at SU-1 in slot $t$, since PU-1 is not transmitting. Therefore, SU-1 utilizes the null space  $\mathbf{A}_{1,f,t-\tau}$ for precoding which was obtained in slot $t-\tau$ when PU-1 was the transmitter. On the other hand, on the receiver side, SU-2 utilizes the null space $\mathbf{B}_{2,f,t}$ for receiver combining to mitigate the interference from PU-2. In order to maximize the beamforming gain of SU link, the SU utilizes the maximum eigenmode of the equivalent channel $\mathbf{H}_{eq,f,t} = \mathbf{B}_{2,f,t}^H\mathbf{H}_{f,t}\mathbf{A}_{1,f,t-\tau}$. If the transmitted power  in slot $t$ is denoted by $P_t$, the achievable rate of the SU link is given by

\begin{align}
R_{f,t} = \frac{T_{data}}{T_{slot}}\log_2\left(1 + \frac{ P_t \Gamma_{f,t}}{\sigma^2_w}\right) = \frac{T_{data}}{T_{slot}}\log_2\left(1 + { P_t \Gamma_{f,t}}\right),
\label{eq:rate}
\end{align}
where $\Gamma_{f,t} $ is the maximum eigenvalue of $\mathbf{H}_{eq,f,t}$ and $\sigma^2_w=1$ is the noise power, and $\frac{T_{data}}{T_{slot}}$ is the fraction of time slot used for signal transmission \cite{raj2018}. Note that if both the PUs are silent in slot $t$, i.e., $s_{f,t}=0$, we get $\mathbf{H}_{eq,f,t}=\mathbf{H}_{f,t}$. The rank of $\mathbf{H}_{eq,f,t}$ is $M_s$ when $s_{f,t}=0$, while the rank is $M_s-M_p$ when  $s_{f,t}=\{1,2\}$. 

\subsubsection{Interference leakage towards PU receiver}
\label{sec:interference_to_pu}
As mentioned above, the SU transmitter cannot access the null space of channel to PU receiver in slot $t$. It uses the null space obtained $\tau$ slots ago when the PU receiver was the transmitter. This results in non-zero interference towards the PU receiver. {The expected interference leakage towards PU receiver in slot $t$ under the Gauss-Markov evolution can be written as follows:
\begin{align}
\mathbb{E}[I_{f,t}]= \mathbb{E}\left[ P_t ||\mathbf{G}^H_{11,f,t}\mathbf{v}_t||^2 \right] = P_t M_p (1-\alpha_f^{2\tau}),
\label{eq:avg_int}
\end{align}
where  $\mathbf{v}_t = \mathbf{A}_{1,f,t-\tau} \mathbf{u}_t$ is the transmit beamforming vector, $\mathbf{u}_t$ is the principle right singular vector of $\textbf{H}_{eq,f,t}$ and $M_p$ is the rank of the channel $\mathbf{G}_{11,f,t}$}. The proof of the second equality is shown in Appendix \ref{app:avg_int}. From the above expression, we observe that the interference to PU in frequency bands with smaller correlation $\alpha_f$ will be higher. Further, higher value of $\tau$ implies larger interference to PU, i.e., older the null space to PU receiver, higher will be the interference.  

Note that the expectation in (\ref{eq:avg_int}) is with respect to the random variations $\Delta \mathbf{G}_{11}$ in the Gauss-Markov model. In this expression, $\tau$ is assumed to be constant. The variable $\tau$ indicates \text{how old} the null space is at any given slot $t$. The expected value of the interference with respect to the random variable $\tau$ can be written as below:

\begin{align}
\mathbb{E}_{\tau}\left[ \mathbb{E}[I_{f,t} | \tau] \right]=  \mathbb{E}_{\tau} \left[P_t M_p (1-\alpha_f^{2\tau})\right].
\label{eq:avg_int_tau}
\end{align}


\subsection{Power control and band selection problem}
\label{sec:problem}
Let $a_t \in \{1,2,...,F\}$ denote the frequency band selected by the SU link and $P_t$ be the transmit power in time slot $t$ . Then, the power control and frequency band selection problem for $t=1,2,...$ can be written as follows:
\begin{align}
\nonumber \{a^*_t, P^*_t\} &= \arg\max_{a_t,P_t} \mathbb{E}\left[R_{a_t,t} \right],
\\ \nonumber  \text{Subject to:~~}& \mathbb{E}_{\tau}\left[ \mathbb{E}[I_{a_t,t} | \tau] \right] \leq I^0,
\\ &P_t \leq P^0,
\end{align}
where $I^0$ is the threshold on the interference towards PU receiver and $P^0$ is the maximum transmit power. By substituting for $R_{f,t}$ and $\mathbb{E}[I_{f,t}|\tau]$ from (\ref{eq:rate}) and (\ref{eq:avg_int_tau}), respectively, we get the following problem statement:
\begin{align}
\nonumber \textbf{(P1)}~~ \nonumber \{a^*_t, P^*_t\} &= \arg\max_{a_t,P_t}\mathbb{E} \left[\log_2\left(1 + { P_t \Gamma_{a_t,t}}\right)\right],
\\ \text{Subject to:~~}& \mathbb{E}_{\tau} \left[P_t M_p (1-\alpha_{a_t}^{2\tau})\right] \leq I^0,
\label{eq:int_cont}
\\ &P_t \leq P^0.
\label{eq:power_cont}
\end{align}

\section{Power control and band selection policies}
\label{sec:policies}
{In this section, we describe two power control schemes: fixed and dynamic. Under the fixed power control scheme, the SU transmits power $P_t = P^{fix}_f$ if it is on frequency band $f$  in slot $t$ and the band is occupied by PUs, i.e., $s_{f,t}\in \{1,2\}$. Under the dynamic power control scheme, the SU transmits power $P_t = P^{dyn}_{f,t}$ if it is on frequency band $f$  in slot $t$ and the band is occupied by PUs, i.e., $s_{f,t}\in \{1,2\}$. The expressions for $P^{fix}_f$ and $P^{dyn}_{f,t}$ are derived in this section.	Note that the SU transmits maximum power $P_t = P^0$ if the PUs are inactive, i.e., $s_{f,t}\in \{0\}$.} In section \ref{sec:band_selection_policies}, we study the performance of frequency band selection policies using the transmit power determined in this section.

\subsection{Power control policies}
\label{sec:power_control_policies}

We first determine the maximum fixed transmit power in band $f$ from SU in order to satisfy constraints (\ref{eq:int_cont}) and (\ref{eq:power_cont}). Let $P_t = P^{fix}_f$ be the the maximum fixed transmit power in band $f$ from SU. The power $P^{fix}_f$ needs to satisfy the following constraints:
\begin{align}
P^{fix}_f M_p \mathbb{E}_\tau \left[1-\alpha_f^{2\tau}\right] \leq I^0,
\\ P^{fix}_f \leq P^0.
\end{align}

Therefore, the maximum fixed transmit power in band $f$ is:
{\small \begin{align}
P^{fix}_f = \min \left( \frac{I^0}{M_p \mathbb{E}_\tau \left[1-\alpha_f^{2\tau}\right]},  P^0  \right)= \left( \frac{I^0}{M_p g(\alpha_f, \mathbf{T}_f)}, P^0 \right) 
\label{eq:p_n_fix_def}
\end{align}}

\noindent where

{\small\begin{multline}
	g(\alpha_f, \mathbf{T}_f) = \frac{\sum \limits_{i} (1-\alpha_f^{2i})}{(\pi_{1,f} + \pi_{2,f})}\times 
	 \bigg(\pi_{1,f} \sum \limits_{s\in \{0,2\}} p_{1s}p_{s2\backslash 1}^{(i-1)} + \\\pi_{2,f} \sum \limits_{s\in \{0,1\}} p_{2s}p_{s1\backslash 2}^{(i-1)} \bigg),
	\label{eq:p_n_fix}
\end{multline}}

\vspace{-1mm}
\noindent and $p_{ss' \backslash s''}^{(i)}$ is the probability of PU link going from state $s$ to state $s'$ in $i$ slots without hitting state $s''$, where $s,s',s'' \in \{0,1,2\}$. $p_{ss'' \backslash s''}^{(i)}$ is obtained from the transition probabilities in $\mathbf{T}_f$. The proof of the last equality is shown in Appendix \ref{app:p_n_fix}.

The SU can also dynamically control the transmit power in each slot to satisfy the constraints (\ref{eq:int_cont}) and (\ref{eq:power_cont}). Let $P^{dyn}_{f,t}$ be the transmit power if SU stays on band $f$ in slot $t$. The interference constraint is satisfied if we have $P^{dyn}_{f,t} M_p (1-\alpha_f^{2\tau}) \leq I^0$. Therefore, the maximum  dynamic power transmitted from the SU is 
\begin{align}
P^{dyn}_{f,t} = \min \left(\frac{I^0}{M_p (1-\alpha_f^{2 \tau})} , P^0\right).
\label{eq:p_n_dyn}
\end{align}

From expressions (\ref{eq:p_n_fix_def}) and (\ref{eq:p_n_dyn}), we can observe that the transmit power increases if $\alpha_f$ is increased. The power also increases of the PU link reversal time $\tau$ is decreased. Thus, the transmit power depends on the temporal correlation as well as the traffic statistic of the PU link. Therefore, the SU can transmit maximum power in the frequency band that has high correlation or small PU link reversal time $\tau$.

Note that the power control is required only when the PU is active in the band selected by SU in slot $t$, i.e., $s_{a_t,t}=\{1,2\}$. The SU transmits maximum power $P_t = P^0$ if $s_{a_t,t}=0$, since there will be no interference to PU receiver in such a slot. 

\subsection{Band selection policies}
\label{sec:band_selection_policies}

Next, we describe four types of polices for band selection using aforementioned power control schemes, namely fixed band fixed power (FBFP), fixed band dynamic power (FBDP), dynamic band fixed power (DBFP), and clairvoyant policy. In the FBFP and FBDP policies, the SU determines which frequency band allows maximum transmit power according to (\ref{eq:p_n_fix}) and (\ref{eq:p_n_dyn}) and stays on that frequency band. On the other hand, in DBFP policies, SU can hop to different frequency bands. The clairvoyant policy assumes a genie SU that can observe all $F$ frequency bands simultaneously in all the slots to maximize its rate.

\subsubsection{Fixed band fixed power (FBFP)}
\label{sec:fbfp}
The simplest band selection and power control policy for the SU is to stay on one frequency band, say $f^*$, and use a fixed transmit power $P^{fix}_{f^*}$. This policy is called fixed band fixed power (FBFP). The band $f^*$ is the frequency band that allows SU to transmit maximum power while satisfying the interference and power constraints. Therefore, we have
\begin{align}
a_t = f^{*} = \arg\max_f P^{fix}_f.
\label{eq:FBFP}
\end{align}

\subsubsection{Fixed band dynamic power (FBDP)}
\label{sec:fbdp}
In this policy as well, the SU stays on one frequency band that allows maximum transmit power. However, the transmit power $P^{dyn}_{f^*,t}$ is dynamically controlled in each slot when the PU is active. The frequency band selected by the SU is same as in the FBFP policy:
\begin{align}
a_t = f^{*} = \arg\max_f P^{fix}_f = \arg\max_f P^{dyn}_{f,t}.
\label{eq:FBDP}
\end{align}

\subsubsection{Dynamic band fixed power (DBFP)}
\label{sec:dbfp}
Under DBFP category, we consider three policies: random, round robin, and DSEE in \cite{liu2013a}. In all three policies, we assume that SU utilizes fixed transmit power $P^{fix}_f$ from (\ref{eq:p_n_fix}) in band $f$. 

In the random policy, the SU selects the frequency band $f$ randomly. Each band has equal probability of getting selected in slot $t$. In the round robin policy, as the name suggests, the SU selects frequency band in round robin fashion. In the DSEE policy presented in \cite{liu2013a}, the band selection problem is treated as a restless multi-armed bandit (MAB) problem. For a fixed transmit power, the problem (\textbf{P1}) can be cast as a MAB problem. Therefore, we apply the DSEE policy in order to dynamically select the frequency band. In the DSEE policy, we consider that if SU is on band $f$ in slot $t$, it uses transmit power $P^{fix}_f$ from (\ref{eq:p_n_fix}) and receives rate (reward) $R_{f,t} = \log_2 \left(1 + P^{fix}_f \Gamma_{f,t} \right)$. The DSEE band selection policy is implemented as described in \cite[Section II.B]{liu2013a}.

\subsubsection{Clairvoyant policy}
\label{sec:clair}
We compare the rate achieved in the aforementioned policies with an ideal,  clairvoyant policy, where a genie-aided SU can observe all $F$ bands simultaneously in each slot $t$, compute null spaces and beamforming vectors in each band and then select the one which provides maximum rate. {In this policy, dynamic power control $P^{dyn}_{f,t}$ is used since it provides higher rate than fixed power, as shown in Theorem \ref{thm:fbdp_v_fbfp} in Section \ref{sec:analysis}}. Therefore, the frequency band selected by the clairvoyant policy in slot $t$, $f_{c,t}$, is given by
\begin{align}
a_t = f_{c,t} &= \arg \max_{f} \log_2 \left(1 + {P^{dyn}_{f,t} \Gamma_{f,t}} \right).
\end{align}
The clairvoyant policy provides an upper bound on achievable rate in the given setting.

\section{Analysis of the policies}
\label{sec:analysis}
In this section, we analyze the achievable rate and interference leakage towards PU under the policies described in the previous section.
\subsubsection{Analysis of FBFP policy}
\label{sec:fbfp_rate}
The expected achievable rate under FBFP policy, if the SU stays on band $f^*$, is denoted by $\mathbb{E}[{R^{(1)}_{f^*,t}}]$. We drop the asterisk in the subscript to simplify the notation. This expression also holds for any frequency band $f \in [1,F]$. The expected rate can be expressed as follows:

{\small
\begin{align}
\nonumber \mathbb{E}[{R^{(1)}_{f,t}}] =& \frac{T_{data}}{T_{slot}} \bigg[\pi_{0,f} \mathbb{E}\left[\log_2\left(1 + {P^0 \Gamma_{f, t}} \right) \vert s_{f,t}=0\right] 
\\&+ (1-\pi_{0,f})  \mathbb{E}\left[\log_2\left(1 + {P^{fix}_{f} \Gamma_{f, t}} \right) \vert s_{f,t}=\{1,2\}\right]\bigg].
\label{eq:fbfp_rate}
\end{align}}

\noindent The first term in above expression is the expected rate when PUs are silent, while the second term is the rate when either PU is active.  As mentioned in Section \ref{sec:Model_problem}, the equivalent channel matrix $\mathbf{H}_{eq,f, t}$ has rank $M_s$ if PU is silent and rank $M_s-M_p$ if one PU is transmitting. Therefore, the expectations in (\ref{eq:fbfp_rate}) can be expressed using the distribution of maximum eigenvalue of rank $M_s$ and $M_s-M_p$ matrices as follows:

{\small \begin{align}
\nonumber &\mathbb{E}\left[\log_2\left(1 + {P^0 \Gamma_{f, t}} \right) \vert s_{f,t}=0\right] 
= \int\limits_{0}^{\infty}\log_2 \left(1 + {P^{fix}_{f} x} \right) f_{M_s}(x)dx
\\\nonumber &\mathbb{E}\left[\log_2\left(1 + {P^{fix}_{f} \Gamma_{f, t}} \right) \vert s_{f,t}=\{1,2\}\right]
 \\&~~~~~~~~~~~~~~~~~~~~~~~~~~ = \int\limits_{0}^{\infty}\log_2 \left(1 + {P^{fix}_{f} x} \right) f_{M_s-M_p}(x)dx,
\label{eq:fbfp_rate2}
\end{align}}

\noindent where $f_{M_s}(x)$ is the probability density function (pdf) of the largest eigenvalue of Hermitian matrix $\mathbf{H}^H_{eq,f,t}\mathbf{H}_{eq,f,t}$ of rank $M_s$. The pdf of the largest eigenvalue is computed using the cumulative distribution function (cdf) $F_{M_s}(x)$ as follows:
\begin{align}
f_{M_s}(x) = \frac{d}{dx}F_{M_s}(x) = \frac{x^{M_s-1}e^{-x}}{\Gamma(M_s)},
\label{eq:pdf_max_eig}
\end{align}
where $F_{M_s}(x) = \frac{\gamma(M_s,x)}{\Gamma(M_s)}$ is the cdf as given in \cite[Eq. 9]{kang2003}, $\gamma(.,.)$ is the incomplete Gamma function and $\Gamma(.)$ is the Gamma function. The expression (\ref{eq:fbfp_rate2}) is computed using the (\ref{eq:pdf_max_eig}) and the expected rate $\mathbb{E}[R^{(1)}_{f,t}]$ is evaluated by substituting (\ref{eq:fbfp_rate2}) in (\ref{eq:fbfp_rate}). Note that the expected interference towards PU remains under FBFP is $I^0$ since the power control scheme ensures that the constraint (\ref{eq:int_cont}) is satisfied with equality.

\subsubsection{Analysis of FBDP policy}
\label{sec:fbdp_rate}
In FBDP, the power is dynamically changed per slot in order to control the interference to the PU. Note that the power $P^{dyn}_{f,t}$ is a function of $\tau$ as shown in (\ref{eq:p_n_dyn}). Therefore, while computing the expected achievable rate, maximum eigenvalue as well as $\tau$ are treated as random variables. The expected rate can be expressed as below:

{\small
\begin{align}
\nonumber \mathbb{E}[R^{(2)}_{f,t}] =&  \frac{T_{data}}{T_{slot}} \bigg[\pi_{0,f} \mathbb{E}\left[\log_2\left(1 + {P^0 \Gamma_{f, t}} \right) \vert s_{f,t}=0\right] 
\\&+ (1-\pi_{0,f})  \mathbb{E}\left[\log_2\left(1 + {P^{dyn}_{f,t} \Gamma_{f, t}} \right) \vert s_{f,t}=\{1,2\}\right]\bigg].
\label{eq:fbdp_rate_v1}
\end{align}}

\noindent The first term is same as the first term in (\ref{eq:fbfp_rate}). The expectation in the second term is computed as follows:
{\small \begin{align}
 \mathbb{E}\left[\log_2\left(1 + {P^{dyn}_{f,t} \Gamma_{f, t}} \right) \right] 
 = \mathbb{E}_\tau \left[ \mathbb{E}_{\Gamma}\left[\log_2\left(1+ {P^{dyn}_{f,t} \Gamma_{f,t}} \right) \vert \tau\right] \right],
\end{align}}

\noindent where $\mathbb{E}_\Gamma$ is the expectation with respect the the maximum eigenvalue assuming $\tau$ is a constant. The condition $s_{f,t}=\{1,2\}$  is dropped from the above expression to simplify the notation. The outer expectation $\mathbb{E}_\tau$ is with respect to $\tau$. The inner expectation is computed by substituting $P^{fix}_f$ with $P^{dyn}_{f,t}$ in (\ref{eq:fbfp_rate2}). Let $z(i) = \mathbb{E}_{\Gamma}\left[\log_2\left(1+ {P^{dyn}_{f,t} \Gamma_{f,t}} \right) \vert \tau=i \right]$ be the inner expectation for $\tau=i$. Then, the outer expectation is given by
\begin{align}
\mathbb{E}_{\tau} [z(i)] = \sum_{i} z(i) \times \Pr(\tau=i),
\label{eq:fbdp_exp_outer}
\end{align}
where $\Pr(\tau=i)$ is computed using 

\begin{align}
\Pr (\tau=i) = \pi_{2,f} \sum_{s\in \{0,1\}} p_{2s}p_{s1\backslash 2}^{(i-1)} + \pi_{1,f} \sum_{s\in \{0,2\}} p_{1s}p_{s2\backslash 1}^{(i-1)}.
\label{eq:pr_tau}
\end{align}
The derivation for the above expression is provided in Appendix \ref{app:p_n_fix}. The expected rate is computed by substituting (\ref{eq:fbdp_exp_outer}) in (\ref{eq:fbdp_rate_v1}). At this point, we state the following theorem comparing the expected rates in FBDP and FBFP:
\begin{theorem}
	The expected rate under fixed band dynamic power (FBDP) exceeds the expected rate under fixed band fixed power (FBFP) policy, i.e., $\mathbb{E}[R^{(2)}_{f,t}]\geq \mathbb{E}[R^{(1)}_{f,t}]$.
	\label{thm:fbdp_v_fbfp}
\end{theorem}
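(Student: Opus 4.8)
The plan is to reduce the comparison to a single application of Jensen's inequality. Since the first terms in (\ref{eq:fbfp_rate}) and (\ref{eq:fbdp_rate_v1}), corresponding to the silent state $s_{f,t}=0$, are identical (both use $P^0$ and the rank-$M_s$ eigenvalue pdf), I would subtract them and reduce the claim to showing that the state-$\{1,2\}$ contribution is larger under dynamic power, i.e.
\begin{align}
\mathbb{E}\left[\log_2\left(1 + P^{dyn}_{f,t}\Gamma_{f,t}\right) \mid s_{f,t}=\{1,2\}\right] \geq \mathbb{E}\left[\log_2\left(1 + P^{fix}_{f}\Gamma_{f,t}\right) \mid s_{f,t}=\{1,2\}\right]. \nonumber
\end{align}

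First I would invoke the fact that the link-reversal time $\tau$ (governed by the PU Markov chain $\mathbf{T}_f$) and the maximum eigenvalue $\Gamma_{f,t}$ (governed by the independent channel fading in the Gauss--Markov model) are statistically independent. This lets me condition on $\Gamma_{f,t}=x$ and write both sides as expectations over $\tau$ integrated against the pdf $f_{M_s-M_p}(x)$. Because integration against $f_{M_s-M_p}(x)$ is monotone, it suffices to prove the inequality pointwise in $x$, namely $\mathbb{E}_\tau[\log_2(1+P^{dyn}_{f,t}x)] \geq \log_2(1+P^{fix}_{f}x)$ for every fixed $x\geq 0$, where the right-hand side carries no $\tau$ since $P^{fix}_f$ is deterministic.

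The core step is a convexity argument. Writing $Y=1-\alpha_f^{2\tau}$ and $c=I^0/M_p$, in the interference-limited regime the powers in (\ref{eq:p_n_fix_def}) and (\ref{eq:p_n_dyn}) become $P^{fix}_f = c/\mathbb{E}_\tau[Y]$ and $P^{dyn}_{f,t}=c/Y$. I would then show that the map $Y\mapsto \log_2(1+cx/Y)$ is convex on $(0,1)$ by computing its second derivative and checking its sign, which reduces to $a(2Y+a)/\left(Y(Y+a)\right)^2>0$ with $a=cx$. Jensen's inequality applied to this convex function yields $\mathbb{E}_\tau[\log_2(1+cx/Y)] \geq \log_2(1+cx/\mathbb{E}_\tau[Y])$, which is exactly the pointwise claim. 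Integrating back over $x$ and re-attaching the common state-$0$ term completes the argument; equality holds only when $\tau$ is deterministic (e.g. $\alpha_f=1$), matching the intuition that the gain from dynamic power vanishes under slowly varying channels.

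The main obstacle is the truncation by the peak-power cap $P^0$, i.e. the outer $\min(\cdot,P^0)$ in both power expressions. The clean Jensen step uses $P^{dyn}_{f,t}=c/Y$, but once $Y$ is small the dynamic power saturates at $P^0$, so the integrand $Y\mapsto \log_2(1+\min(c/Y,P^0)x)$ is only piecewise convex and acquires a downward (concave) kink at $Y=c/P^0$, which breaks a global Jensen bound. I would handle this by working in the interference-limited operating regime, where the interference constraint (\ref{eq:int_cont}) is active for both schemes so that the $\min$ resolves to its first argument and the convexity argument applies verbatim; I would then remark that this is the regime of interest for an underlay network with a stringent threshold $I^0$, and note that the power-limited case requires a separate check, e.g. a supporting-line bound anchored at $\mathbb{E}_\tau[Y]$ rather than the tangent at the kink.
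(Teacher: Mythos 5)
Your proposal is correct and takes a genuinely different, and in fact cleaner, route than the paper's own proof in Appendix~\ref{app:fbdp_v_fbfp}. The reduction is the same up to the pointwise claim: the paper also cancels the common $s_{f,t}=0$ term, conditions on $\Gamma_{f,t}$, and reduces to showing $\mathbb{E}_\tau[\log_2(1+\frac{I^0}{M_p(1-\alpha_f^{2\tau})}\Gamma)] \geq \log_2(1+\frac{I^0}{M_p\mathbb{E}_\tau[1-\alpha_f^{2\tau}]}\Gamma)$ for each fixed $\Gamma$. But where you apply Jensen once to the convex map $Y \mapsto \log_2(1+cx/Y)$ (your second-derivative check $a(2Y+a)/\left(Y(Y+a)\right)^2>0$ is right, and convex Jensen gives $\mathbb{E}[\varphi(Y)]\geq \varphi(\mathbb{E}[Y])$ directly), the paper takes a longer path: it splits the conditional expectation via Bayes' rule into the $s_{f,t}=1$ and $s_{f,t}=2$ cases (random variables $a$ and $b$), lower-bounds each term by $-\log_2 \mathbb{E}[1/(1+a\Gamma)]$ (Jensen on the reciprocal), reduces the claim to $(1+c\Gamma)\,\mathbb{E}[1/(1+a\Gamma)]\leq 1$, and closes with a Chebyshev-type negative-correlation inequality between $1/a$ and $a\Gamma/(1+a\Gamma)$. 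Your argument shows the state split is unnecessary (conditioning on $s_{f,t}\in\{1,2\}$ merely fixes the mixture law of $\tau$, given the independence of $\tau$ and $\Gamma_{f,t}$ that both proofs implicitly use), avoids the correlation lemma entirely, and makes the equality case (degenerate $\tau$, i.e.\ $\alpha_f\to 1$) transparent, which matches the numerical observations in Fig.~\ref{fig:fbfp_vs_fbdp_large_alpha}.

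On the $\min(\cdot,P^0)$ cap: you are more careful than the paper, whose proof silently substitutes the uncapped expressions $P^{dyn}_{f,t}=I^0/(M_p(1-\alpha_f^{2\tau}))$ and $P^{fix}_f=I^0/(M_p\mathbb{E}_\tau[1-\alpha_f^{2\tau}])$ into (\ref{eq:r2_v_r1}), i.e.\ it tacitly assumes exactly the interference-limited regime you make explicit. One correction to your closing remark, though: no ``separate check'' can establish the inequality in the power-limited corner, because it can genuinely reverse there. If $I^0/(M_p\mathbb{E}_\tau[Y])\geq P^0$ so that $P^{fix}_f=P^0$, then $P^{dyn}_{f,t}=\min(c/Y,P^0)\leq P^{fix}_f$ pointwise, with strict inequality on the event $\{Y>c/P^0\}$, which can have positive probability even though $\mathbb{E}_\tau[Y]\leq c/P^0$; on that event the dynamic scheme transmits strictly less power and $\mathbb{E}[R^{(2)}_{f,t}]<\mathbb{E}[R^{(1)}_{f,t}]$. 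So the honest statement of the theorem carries your regime restriction, rather than a patched supporting-line bound at the kink.
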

\begin{proof}
Appendix \ref{app:fbdp_v_fbfp}.
\end{proof}

\subsubsection{Analysis of DBFP policies}
\label{sec:dbdp_rate}
In DBFP policies, the SU uses fixed transmit power $P^{fix}_{f}$ when it is on band $f$. In round robin and random band selection policies, each band is selected for same number of times on an average. Therefore, the expected rate under these two policies will be equal. Let $\mathbb{E}[{R^{(3)}_{f,t}}]$ be the expected rate under random and round robin policies. Since each band is visited with equal probability, the expected rate can be written as:
\begin{align}
\mathbb{E}[{R^{(3)}_{f,t}}] =\frac{T_{data}}{T_{slot}} \frac{1}{F}\sum_{f'=1}^{F} \mathbb{E}[R^{(1)}_{f',t}],
\end{align}
where $\mathbb{E}[R^{(1)}_{f',t}]$ is the expected rate under FBFP policy if the SU stays on band $f'$. The expression for $\mathbb{E}[R^{(1)}_{f',t}]$ is obtained from (\ref{eq:fbfp_rate}) by substituting $f=f'$. It can be observed that expected rate under round robin or random band selection is lower than the rate in FBFP, i.e., $\mathbb{E}[{R^{(3)}_{f,t}}]\leq \mathbb{E}[{R^{(1)}_{f,t}}]$. This is because the FBFP policy selects the band that maximizes expected rate, therefore hopping to a different frequency bands does not improve the achievable rate of the SU link. 

For the DSEE policy proposed in \cite{liu2013a}, the performance of DSEE is measured in terms of regret, i.e., difference between the rate received in fixed band policy and the rate received in DSEE. Since the regret is shown to be positive in \cite[Theorem 1]{liu2013a}, we can conclude that DSEE provides lower rate as compared to the fixed band policy (FBFP). Therefore, these DBFP policies using fixed transmit power $P^{fix}_{f}$ do not provide higher rate as compared to the FBFP policy.

The interference towards PU under these policies is higher than the threshold as stated in the following theorem.

\begin{theorem}
	In dynamic band fixed power polices (DBFP), the expected interference leakage towards PU exceeds the threshold $I^0$.
	\label{thm:int_dbfp}
\end{theorem}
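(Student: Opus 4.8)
The plan is to show that hopping between bands forces the secondary transmitter to rely on a staler null space, which inflates the per-slot interference beyond the level for which the fixed power $P^{fix}_f$ was calibrated. Recall from (\ref{eq:p_n_fix_def}) that, in the interference-limited regime where the cap $P^0$ is inactive, $P^{fix}_f$ is chosen so that the constraint (\ref{eq:int_cont}) holds with equality under the link-reversal-time distribution (\ref{eq:pr_tau}), i.e. $P^{fix}_f M_p \, \mathbb{E}_\tau[1-\alpha_f^{2\tau}] = I^0$. The distribution (\ref{eq:pr_tau}) is derived in Appendix \ref{app:p_n_fix} under the premise that the SU is present on band $f$ in \emph{every} slot, so that the null space of $\mathbf{G}_{11,f,\cdot}$ is refreshed at the most recent slot in which the PU link was in state~$1$.

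First I would define, for a DBFP policy, the effective null-space age $\tilde{\tau}$ as the number of slots between the current slot $t$ and the most recent slot on band $f$ that was both in PU state~$1$ and in which the SU was present to sense it. Because the SU abandons band $f$ in a strictly positive fraction of slots under any DBFP policy (random, round robin, or the exploration phases and epoch switches of DSEE), the set of state-$1$ slots that the SU actually observes on band $f$ is a subset of all state-$1$ slots. Coupling the two scenarios on a common realization of the PU Markov chain and the band-selection process, the most recent observed state-$1$ slot is never more recent than the most recent state-$1$ slot, so $\tilde{\tau} \ge \tau$ pathwise, and hence $\tilde{\tau}$ stochastically dominates $\tau$.

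Next, since $0 \le \alpha_f \le 1$, the map $\tau \mapsto 1-\alpha_f^{2\tau}$ is non-decreasing, so stochastic dominance yields $\mathbb{E}[1-\alpha_f^{2\tilde{\tau}}] \ge \mathbb{E}_\tau[1-\alpha_f^{2\tau}]$. Multiplying by $P^{fix}_f M_p$ and invoking the calibration identity gives, for the expected interference leakage on band $f$ under DBFP,
\begin{align}
\mathbb{E}[I_{f,t}] = P^{fix}_f M_p \, \mathbb{E}\left[1-\alpha_f^{2\tilde{\tau}}\right] \ge P^{fix}_f M_p \, \mathbb{E}_\tau\left[1-\alpha_f^{2\tau}\right] = I^0,
\end{align}
which is exactly the claim, with strict inequality whenever there is positive probability of a missed state-$1$ slot. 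Averaging over the visited bands preserves the inequality.

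The main obstacle I anticipate is making the stochastic-dominance step airtight for DSEE, where band selection is adapted to the observed rewards and is therefore not independent of the PU/channel history; one must argue that conditioning on SU presence can only delete state-$1$ observations and thus can only increase the age, never decrease it, so that $\tilde{\tau} \ge \tau$ survives the adaptivity. A secondary point is restricting attention to the interference-limited regime: when $P^{fix}_f = P^0$ the constraint is slack, and the conclusion is then most naturally phrased as the DBFP interference exceeding the FBFP interference rather than $I^0$ itself.
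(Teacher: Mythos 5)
Your proposal follows essentially the same route as the paper's proof: the paper likewise argues that under band hopping the perceived link-reversal time $\tau'$ (your $\tilde{\tau}$) is pathwise larger than the true $\tau$ used to calibrate $P^{fix}_f$, so the expected leakage becomes $I^0\,\mathbb{E}_{\tau'}\left[1-\alpha_f^{2\tau'}\right]/\mathbb{E}_{\tau}\left[1-\alpha_f^{2\tau}\right]>I^0$, illustrated via round robin and then asserted for the other DBFP policies. Your rendering is in fact more careful than the paper's, since the subset/coupling argument for $\tilde{\tau}\geq\tau$ and the explicit restriction to the interference-limited regime $P^{fix}_f<P^0$ (and the adaptivity caveat for DSEE) address points the paper's proof leaves implicit.
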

\begin{proof}
{Consider that SU follows a DBFP policy, for example, round robin band selection policy and the SU is on band $f$ in slot $t$. Let $s_{f,t}=1$, i.e. PU-1 is the transmitter and PU-2 is the receiver. Under the robin robin policy, SU was on the same band during previous slots $t-F, t-2F, t-3F,\cdots$. The null space to PU receiver (PU-2) was obtained in slot $t-kF$ when PU-2 was the transmitter where $k = \arg \min_{k} \left( s_{f,t-kF}=2 \right)$. Therefore $\tau'=kF$ is the PU link reversal time perceived by the SU under this policy. We can see that $\tau'$ is larger than the PU link reversal time $\tau = \arg \min_{k} (s_{f,t-k}=2)$ in FBFP and FBDP policies wherein the SU stays on the same band and we have $\tau'>\tau$. This holds true for other DBFP policies as well. The expected interference towards PU in band $f$ under DBFP policies is given by $\mathbb{E}_{\tau'}[P^{fix}_f M_p (1-\alpha_f^{2\tau'})]=P^{fix}_fM_p \mathbb{E}_{\tau'}[ (1-\alpha_f^{2\tau'})]$. Since $\tau' > \tau$, we have $\mathbb{E}_{\tau'}[ (1-\alpha_f^{2\tau'})] > \mathbb{E}_{\tau}[ (1-\alpha_f^{2\tau})]$. Further, since the fixed power is given as $P^{fix}_f = I^0/M_p \mathbb{E}_{\tau}[(1-\alpha_f^{2\tau})]$, the average interference in band $f$ under DBFP policies is $P^{fix}_f M_p \mathbb{E}_{\tau'}[ (1-\alpha_f^{2\tau'})] = I^0 \frac{\mathbb{E}_{\tau'}[ (1-\alpha_f^{2\tau'})]}{\mathbb{E}_{\tau}[ (1-\alpha_f^{2\tau})]} > I^0$.}
\end{proof}
\begin{corollary}
	There exists no fixed transmit power for DBFP policies that provides higher rate than the single band policy while satisfying the interference constraint (\ref{eq:int_cont}).
	\label{cor:dbfp}
\end{corollary}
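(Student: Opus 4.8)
The plan is to combine Theorem \ref{thm:int_dbfp} with the rate comparison already obtained for $\mathbb{E}[R^{(3)}_{f,t}]$, showing that enforcing the interference constraint under any DBFP policy forces a strictly smaller fixed power in every band, and that this reduced power cannot recover the rate lost to band hopping. First I would characterize the largest admissible fixed power: letting $\tau'$ denote the perceived link reversal time under the DBFP policy as in the proof of Theorem \ref{thm:int_dbfp}, a $\tau$-independent power $\tilde{P}_f$ satisfies the constraint (\ref{eq:int_cont}) if and only if $\tilde{P}_f M_p \mathbb{E}_{\tau'}[1-\alpha_f^{2\tau'}] \leq I^0$, so the maximal feasible value is $\tilde{P}_f = I^0 / \big(M_p \mathbb{E}_{\tau'}[1-\alpha_f^{2\tau'}]\big)$.

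Next, I would invoke the ordering $\tau' > \tau$ established in Theorem \ref{thm:int_dbfp}, which yields $\mathbb{E}_{\tau'}[1-\alpha_f^{2\tau'}] > \mathbb{E}_{\tau}[1-\alpha_f^{2\tau}]$ and hence $\tilde{P}_f < P^{fix}_f$ for every band $f$. Because $\log_2(1+P\Gamma_{f,t})$ is strictly increasing in $P$ for each realization of $\Gamma_{f,t}$, and because the silent-state term (which uses $P^0$ in both policies) is unaffected, the per-slot expected rate on band $f$ under any feasible DBFP power is bounded above by $\mathbb{E}[R^{(1)}_{f,t}]$, the FBFP-power rate on that band. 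The overall expected rate of a DBFP policy is a convex combination of these per-band rates, weighted by the long-run visit frequencies $w_f$ with $\sum_f w_f = 1$; by stationarity the silent-state probability at a visit to band $f$ remains $\pi_{0,f}$, so the per-band formula matches (\ref{eq:fbfp_rate}). The DBFP rate is therefore at most $\sum_f w_f \mathbb{E}[R^{(1)}_{f,t}] \leq \max_f \mathbb{E}[R^{(1)}_{f,t}] = \mathbb{E}[R^{(1)}_{f^*,t}]$, which is exactly the FBFP rate by (\ref{eq:FBFP}). Since $\mathbb{E}[R^{(1)}_{f^*,t}] \leq \mathbb{E}[R^{(2)}_{f^*,t}]$ by Theorem \ref{thm:fbdp_v_fbfp}, the bound holds against either single-band baseline, establishing the corollary.

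The main obstacle is securing the strict inequality $\tilde{P}_f < P^{fix}_f$ for an arbitrary DBFP policy rather than only for round robin. I would argue that the ordering $\tau' > \tau$ holds for any policy that leaves band $f$ and later returns, since the SU can refresh the null space of the PU receiver only while it is tuned to band $f$; thus the most recent valid sensing slot is never more recent than it would be for the always-present FBFP user, forcing $\tau' \geq \tau$ with strict inequality in expectation whenever the band is ever vacated. Once this ordering is in hand, the power reduction and the convex-combination bound are routine, so the interference argument of Theorem \ref{thm:int_dbfp} does essentially all the work.
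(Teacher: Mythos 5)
Your proposal is correct and follows essentially the same route as the paper: invoke Theorem \ref{thm:int_dbfp} to show that $P^{fix}_f$ violates the interference constraint under band hopping, conclude that any feasible fixed power must be strictly smaller, and then use monotonicity of the rate in power together with the earlier comparison $\mathbb{E}[R^{(3)}_{f,t}]\leq\mathbb{E}[R^{(1)}_{f,t}]$ to see that the reduced-power DBFP rate cannot beat the fixed band policy. Your write-up is simply more explicit than the paper's terse argument --- characterizing the maximal feasible power $\tilde{P}_f = I^0/\bigl(M_p\,\mathbb{E}_{\tau'}[1-\alpha_f^{2\tau'}]\bigr)$ and spelling out the convex-combination bound over visit frequencies --- but the underlying idea is identical.
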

\begin{proof}
Since the fixed power $P^{fix}_{f}$ incurs interference above the threshold $I^0$, one way of satisfying the interference constraint is to transmit lower power $P^{f} < P^{fix}_{f}$ that will satisfy the interference constraint in band $f$. However, this approach  reduces the rate below $\mathbb{E}[R^{(3)}_{f,t}]$ which was already lower than single band policy. Therefore, there is no transmit power that will increase the rate of dynamic band polices while satisfying the interference constraint.
\end{proof}
\subsubsection{Analysis of clairvoyant policy}
\label{sec:clair_rate}
In the clairvoyant policy, we assume that a genie-aided SU observes all $F$ frequency band in each slot, computes the null space to PUs in all bands and then selects the band offering the maximum rate. In this section, we analyze the expected gain of clairvoyant policy over FBFP. By doing so, we can find an upper bound on the achievable rate.

Let us define the expected gain of clairvoyant policy over FBFP as follows:
\begin{align}
\mathbb{E}[g^{(c)}_t] = \mathbb{E} [R^{(c)}_{f,t} - R^{(1)}_{f,t}],
\label{eq:clair_gain}
\end{align}

For simplicity, we consider that clairvoyant policy provides gain in slot $t$ if a) the PU is active in band selected by FBFP, i.e., $s_{f^*,t}=\{1,2\}$ and b) there exists a band $f'$ with no active PU. Under this condition the beamforming gain and transmitted power in band $f'$ will be higher than in band $f^*$. The probability of satisfying this condition in a time slot is $(1-\pi_{0,f^*}) \left(1- \prod_{f' \neq f^*} (1-\pi_{0,f'}\right)$ and the expected gain is given as:

{\small \begin{align}
\nonumber \mathbb{E}[g^{(c)}_t] = \frac{T_{data}}{T_{slot}} (1-\pi_{0,f^*}) \left(1- \prod_{f' \neq f^*} (1-\pi_{0,f'})\right)\\
\times \mathbb{E}\left[\log_2 \left( \frac{1+ P^0 x}{1 + P^{fix}_{f^*} y } \right)\right],
\label{eq:clair_gain2}
\end{align}}

\noindent where $x$ is a random variable with pdf $f_{M_s}(x)$ as mentioned in (\ref{eq:pdf_max_eig}) and  $y$ is a random variable with pdf $f_{M_s-M_p}(y)$. The expectation on the RHS can be written as follows:
\begin{align}
\mathbb{E}\left[\log_2 \left( \frac{1+ P^0 x}{1 + P^{fix}_{f^*} y } \right)\right] =\mathbb{E}_y\left[ \mathbb{E}_x \left[\log_2 \left( \frac{1+ P^0 x}{1 + P^{fix}_{f^*} y } \right) \vert y\right]\right]
\label{eq:clair_gain3}
\end{align}
For a given value of $y$, the inner expectation is a concave function of $x$, it has an upper bound as follows:

{\small \begin{align}
 \nonumber \mathbb{E}_x \left[\log_2 \left( \frac{1+ P^0 x}{1 + P^{fix}_{f^*} y } \right) \vert y \right] \leq \log_2  \left( \frac{1+ P^0 \mathbb{E} [x]}{1 + P^{fix}_{f^*} y } \right) 
\\  = \log_2  \left( \frac{1+ P^0 M_s}{1 + P^{fix}_{f^*} y } \right) 
 \label{eq:clair_gain4}
\end{align}}
 
\noindent Substituting the above inequality in (\ref{eq:clair_gain3}), we get 
{\small \begin{align}
\nonumber \mathbb{E}\left[\log_2 \left( \frac{1+ P^0 x}{1 + P^{fix}_{f^*} y } \right)\right] &\leq \mathbb{E}_y\left[ \log_2 \left( \frac{1+ P^0 M_s}{1 + P^{fix}_{f^*} y } \right) \right]
\\   = \frac{1}{\Gamma(M_s-M_p)}&\int\limits_{0}^{\infty} \log_2 \left( \frac{1+ P^0 M_s}{1 + P^{fix}_{f^*} y } \right) y^{M_s - M_p -1} e^{-y}dy.
\label{eq:clair_gain5}
\end{align}}

\noindent Therefore, the expected gain $ \mathbb{E}[g^{(c)}_t]$ in (\ref{eq:clair_gain2}) us upper bounded as follows:
{\small \begin{align}
\nonumber \mathbb{E}[g^{(c)}_t] \leq \frac{T_{data}}{T_{slot}} \frac{(1-\pi_{0,f^*}) \left(1- \prod \limits_{f' \neq f^*} (1-\pi_{0,f'})\right)}{\Gamma(M_s-M_p)}  
  \\ \times \int\limits_{0}^{\infty} \log_2 \left( \frac{1+ P^0 M_s}{1 + P^{fix}_{f^*} y } \right) y^{M_s - M_p -1} e^{-y}dy =  g^{c}_{max}
  \label{eq:clair_gain6}
\end{align}}

Note that, in the above equation, $g^{c}_{max}$ depends on the temporal correlation $\alpha_{f}$ through $P^{fix}_{f}$. The transition probabilities of PU links also affect the gain through $P^{fix}_{f^*} = \min \left(\frac{I^0}{M_p g(\alpha_{f^*}, \mathbf{T}_{f^*})} , P^0 \right)$. We can see that as $\alpha_{f^*} \rightarrow 1$, the power $P^{fix}_{f^*} \rightarrow P^0$ and the difference between $P^0 M_s$ and $P^{fix}_{f^*}y$ reduces for any given $y$. Therefore, higher temporal correlation decreases the gain of clairvoyant policy. It should be noted that the clairvoyant policy provides the maximum rate amongst all possible policies. If the gain of this policy over FBFP is small, then it means that there exists no policy that achieves significantly higher rate than the fixed band policy.


\begin{table}
	\centering
	\caption{\small PU traffic configurations}
	\vspace{0mm}
		\begin{tabular}{|c|c|c|}
			\hline
			Traffic config. & Transition probability matrix & $\mathbb{E}[\tau]$\\
			\hline \hline
			0 & $\mathbf{T}_f = \begin{bmatrix}
			0 & 0 & 1	\\
			1 & 0 & 0	\\
			0 & 0.2 & 0.8\\
			\end{bmatrix}$ 	 & 4.43\\ \hline
			1 & $\mathbf{T}_f = \begin{bmatrix}
			0 & 0 & 1 	\\
			0.67 & 0.33 & 0	\\
			0 & 0.5  &0.5 \\
			\end{bmatrix}$ 	 & 1.83\\ \hline
			2 & $\mathbf{T}_f =  \begin{bmatrix}
			0 & 0 & 1\\
			0.4 & 0.6 & 0\\
			0 & 1 & 0\\
			\end{bmatrix}$ 	 & 1.83\\ \hline
			3 & $\mathbf{T}_f = \begin{bmatrix}
			0 & 0 & 1\\
			0.2 &0.8 &0\\
			0 & 0.33 & 0.67\\
			\end{bmatrix}$ 	 & 4.11\\ \hline
			4 & $\mathbf{T}_f = \begin{bmatrix}
			0 & 0 & 1\\
			0.17 & 0.83 & 0\\
			0 & 0.5 & 0.5\\
			\end{bmatrix}$ 	 & 4.67\\ \hline
			5 & $\mathbf{T}_f = \begin{bmatrix}
			0 & 0 & 1\\
			0.14 & 0.86 &	0 \\
			0 & 1 & 0\\
			\end{bmatrix}$ 	 & 5.67 \\ \hline
			6 & $\mathbf{T}_f = \begin{bmatrix}
			0 & 0 & 1\\
			1 & 0 & 0\\
			0 & 0.4 & 0.6 \\
			\end{bmatrix}$ 	 & 2.17 \\ \hline	
		\end{tabular}	
	\label{tab:traffic_config}
	\vspace{-5mm}	
\end{table}

{
	\subsection{Overhead and computational complexity}
	\label{sec:complexity}
	In order to implement the policies, the SU pair  requires the knowledge of transition probability matrix $\mathbf{T}_f$ and temporal correlation $\alpha_f$. Acquisition of $\mathbf{T}_f$ and $\alpha_f$ results in additional overhead and complexity in implementation of the policies as discussed below.
	\subsubsection{Computation of transition probabilities}
	\label{sec:trans_prob}
	The transition probabilities for PU link, $p_{kl,f}$, are computed from the knowledge of the traffic configuration used by PU. We assume that PUs follow LTE TDD traffic configurations  described in  {3GPP 36.211} \cite{3gpp2017} specifying which slots are used for uplink and downlink in one LTE subframe. Without the loss of generality, we can assume that transmission from PU-1 to PU-2 is downlink ($s_{f,t}=1$) and transmission from PU-2 to PU-1 is uplink ($s_{f,t}=2$). 
	
	The traffic configuration, indicated by an integer between 0 and 6, is set by the operator of the PU network and can be conveyed to SUs. Using the knowledge of traffic configuration, the SU can compute the number of state transitions of the PU link in one LTE subframe. The transition probabilities, $p_{kl,f}$, are computed by counting the number of transitions in PU link state $k$ to $l, k,l\in \{0,1,2\}$ in a subframe and diving by the total number of slots in the subframe. The transition probabilities for traffic configuration 0 to 6 are shown in Table \ref{tab:traffic_config}.

	The configuration remains unchanged for a long duration, usually hours \cite{3gpp2012}. The overhead of obtaining $\mathbf{T}_f$ depends on how often the configuration changes. The PU operator needs to provide the information only when the configuration is changed. The matrix $\mathbf{T}_f$ for each traffic configuration can be stored in a memory at SU. Thus, $\mathbf{T}_f$ is deterministically obtained without error from the knowledge of the traffic configuration and requires no additional runtime computations. 

	\subsubsection{Computation of temporal correlation}
	\label{sec:alpha_f}
	The temporal correlation coefficient $\alpha_f$ can be estimated using the covariance matrices $\mathbf{\hat{Q}}_{i,f,t}, i=1,2$ computed during $T_{sense}$ duration. The estimation algorithm proposed in \cite{wen2006} can be used to compute $\alpha_f$. The associated complexity is, $\mathcal{O}(M_s^2)$, same as that of computing the covariance matrix. 
}

\section{Simulation Results}
\label{sec:results}
In this section, we compare the performance of the policies in terms of achievable rate at SU and interference towards PU. For the simulations, the matrix $\mathbf{T}_{f}$ for PU link is constructed using traffic models of TDD LTE in {3GPP 36.211} \cite{3gpp2017} considering PU-1 is the base station and PU-2 is UE. Therefore, downlink is treated as $s_{f,t}=1$ and uplink is $s_{f,t}=2$. The transition probabilities $p_{kl,f} = \Pr(s_{f,t+1}=l|s_{f,t}=k), k,l \in \{0,1,2\}$ under these configurations are computed and are shown in Table \ref{tab:traffic_config} along with the average time for link reversal $\mathbb{E}[\tau]$ under the model. It is computed using the following expression:
\begin{align}
\mathbb{E}[\tau] = \sum_{i} i  \times \Pr (\tau=i),
\end{align}
where $\Pr (\tau=i)$ is obtained from (\ref{eq:pr_tau}). The temporal fading coefficient is modeled as $\alpha_f = J_0(2\pi f_d T_{slot})$, where $J_0(.)$ is the 0-th order Bessel function, $f_d$ is the Doppler frequency and $T_{slot}= 1$ ms. The fraction of time slot used for SU data transmission is $T_{data}/T_{slot} = 0.8$, while that for sensing is  $T_{sense}/T_{slot} = 0.2$ \cite{gabran2011}. The number of antennas at the SUs and PUs are $M_s=4$ and $M_p=1$, respectively, unless specified otherwise. Total transmit power and interference thresholds are $P^0/\sigma^2_w = 20$dB and $I^0/\sigma^2_w = -10$dB. Analytical and simulation results are shown for the power control and band selection policies.

\begin{figure}[t!]
	\centering		
	\begin{subfigure}[b]{\columnwidth}
		\centering
		\includegraphics[width=0.8\columnwidth]{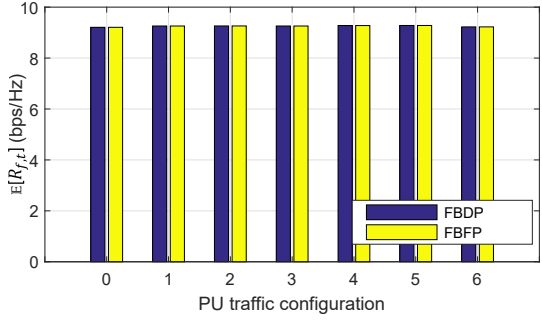}
		\vspace{-1mm}
		\caption{\small $\alpha=0.9998$ ($f_d = 5$Hz)}
		\label{fig:fbfp_vs_fbdp_large_alpha}
	\end{subfigure}
	\begin{subfigure}[b]{\columnwidth}
		\centering
		\includegraphics[width=0.8\columnwidth]{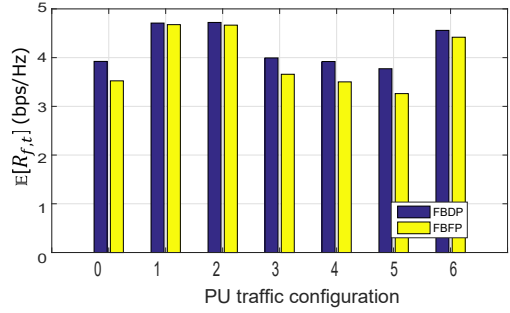}
		\vspace{-1mm}
		\caption{\small $\alpha=0.9938$ ($f_d = 25$Hz)}
		\label{fig:fbfp_vs_fbdp_small_alpha}
	\end{subfigure}		
	\vspace{-2mm}
	\caption{\small Comparison between the achievable rate of SU under FBFP and FBDP with different PU traffic configurations. $F=1$. $I^0/\sigma^2_w=-10$dB. $P^0/\sigma^2_w=20$dB.}
	\label{fig:fbfp_vs_fbdp_traffic_model}
	\vspace{-5mm}
\end{figure}

\subsubsection{Comparison between FBFP and FBDP for $F=1$}
\label{sec:results_single_band}
First, we compare the performance of fixed band policies: FBFP and FBDP for $F=1$ under the PU traffic models described above. The average rate under the two polices is shown in Fig. \ref{fig:fbfp_vs_fbdp_traffic_model} for $\alpha_f = 0.9998$ and $\alpha_f = 0.9938$. We observe that both the policies provide same rate when  $\alpha_f = 0.9998$. For $\alpha_f \rightarrow 1$, the dynamic power $P^{dyn}_{f,t} \rightarrow P^0$ and it does not change significantly with $\tau$, hence it becomes approximately constant as in FBFP. Therefore, the two policies provide same rate as shown in Fig.\ref{fig:fbfp_vs_fbdp_large_alpha}. On the other hand, as the temporal correlation decreases to $\alpha_f = 0.9938$ as shown in Fig. \ref{fig:fbfp_vs_fbdp_small_alpha}, the rate achievable rate differs under the two policies. The rate is maximum under PU traffic models 1 and 2, while its smallest under PU traffic model 5. It can be observed that the achievable rate in this case varies inversely with the average link reversal time $\mathbb{E}[\tau]$ shown in Table \ref{tab:traffic_config}. For smaller link reversal time, the PU switches its role from transmitter to receiver in short duration. Therefore, the SU transmitter has more accurate null space to the PU receiver in a given time slot and it can transmit higher power while still keeping the interference below the threshold. This in turn results in higher rate for the SU. Further, it can be seen that the rate under the two polices reduces with smaller temporal correlation as shown in Fig. \ref{fig:fbfp_vd_fbdp_alpha}. This is due to the fact for larger $\mathbb{E}[\tau]$ or smaller $\alpha_f$ the SU transmitter needs to transmit lower power $P^{fix}_{f}$ and $P^{dyn}_{f,t}$ according to (\ref{eq:p_n_fix}) and (\ref{eq:p_n_dyn}), respectively, which in turn reduces the achievable rate. These results also confirm Theorem \ref{thm:fbdp_v_fbfp} since rate under FBDP is no smaller than in FBFP.

\begin{figure}[t!]
	\centering		
	\includegraphics[width=0.8\columnwidth]{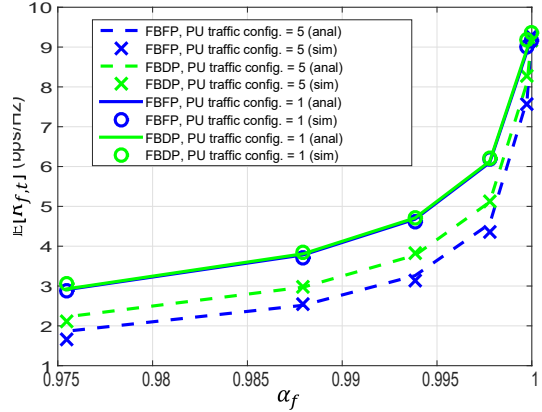}
	\caption{\small Rate at SU under FBFP and FBDP for different temporal correlations $\alpha_f \in [0.9755, 0.9998]$ or Doppler rate $f_d \in [5, 50]$. $F=1$. $I^0/\sigma^2_w=-10$dB. $P^0/\sigma^2_w=20$dB.}	
	\label{fig:fbfp_vd_fbdp_alpha}
	\vspace{-2mm}
\end{figure}

It can also be observed from Fig. \ref{fig:fbfp_vs_fbdp_small_alpha} and \ref{fig:fbfp_vd_fbdp_alpha} that the difference in the rate of SU link under FBFP and FBDP is negligible if PU traffic configuration is 1 and 2, i.e., when $T_{f,rev}$ is small. As the link reversal time approaches 1 as in the case of traffic configurations 1 and 2, the transmitted power  under the two polices become similar: $P^{dyn}_{f,t} \approx P^{fix}_{f} \rightarrow \frac{I^0}{M_p(1-\alpha_f^{2\tau})}$, resulting in similar achievable rates.

The fixed band polices FBFP and FBDP select the band that maximizes the power $P^{fix}_{f}$, as this band maximizes the expected rate. The transmitted power is inversely proportional to the average link reversal time of the traffic models as shown in Fig. \ref{fig:P_fix_v_T_rev}. We can observe that if there are multiple bands available with same  temporal correlations $\alpha_f$, then the fixed band policies select the band with lowest $\mathbb{E}[\tau]$. Similarly, if the link reversal time is same in different bands, the policies would select the band with maximum temporal correlation.

\begin{figure}[t!]
	\centering		
	\includegraphics[width=0.9\columnwidth]{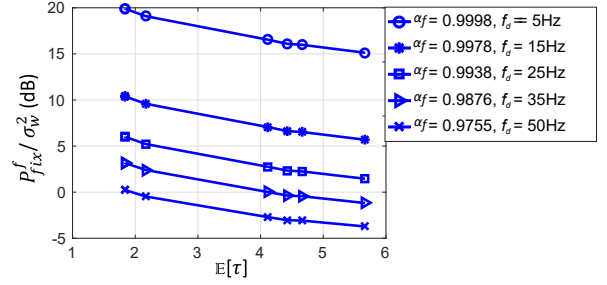}
	\vspace{-2mm}
	\caption{\small Fixed transmit power as a function of $\alpha_f$ and PU link reversal time $\mathbb{E}[\tau]$.}
	\label{fig:P_fix_v_T_rev}
	\vspace{-4mm}
\end{figure}

\begin{figure}[t!]
	\centering		
	\begin{subfigure}[b]{\columnwidth}
		\centering
		\includegraphics[width=0.9\columnwidth]{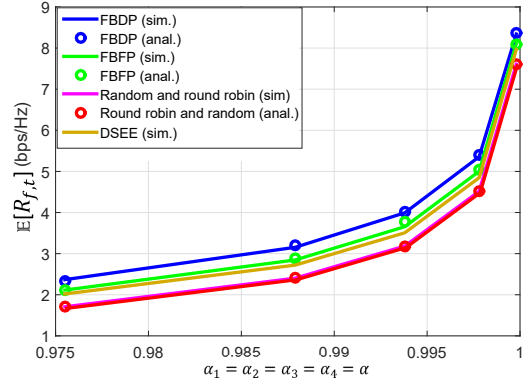}
		\caption{\small Rate under different policies.}
		\label{fig:all_policies_rate}
	\end{subfigure}
	\begin{subfigure}[b]{\columnwidth}
		\centering
		\includegraphics[width=0.9\columnwidth]{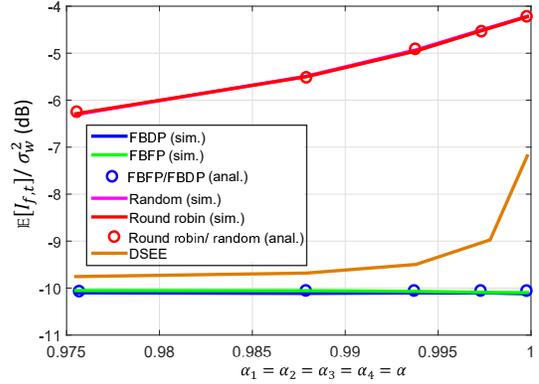}
		\caption{\small Interference towards PU under different policies.}
		\label{fig:all_policies_int}
	\end{subfigure}		
	\caption{\small Comparison between policies with $F=4$ bands. PUs in band 1, 2, 3, and 4 follow traffic configurations 0, 3, 4, and 5, respectively. $I^0/\sigma^2_w=-10$dB. $P^0/\sigma^2_w=20$dB.}
	\label{fig:all_policies}
	\vspace{-4mm}
\end{figure}

\subsubsection{Comparison of fixed and dynamic band policies for $F=4$}
\label{sec:results_multiband}
In this section, we compare FBFP and FBDP with DBFP policies when $F=4$ and bands 1, 2, 3, and 4 have PU traffic configurations $0,3,4$ and $5$, respectively. We selected the 4 traffic configurations with smallest $\mathbb{E}[\tau]$ so that powers  $P^{dyn}_{f,t}$ and $P^{fix}_f$ are not similar. The achievable rate and interference towards PU is under different policies is shown in Fig. \ref{fig:all_policies}. We can see that the FBDP policy provides higher rate than other policies as shown in Fig. \ref{fig:all_policies_rate}. Further, the interference under dynamic band policies is higher than the required threshold as mentioned in Theorem \ref{thm:int_dbfp}. It is interesting to note that the interference under round robin, random and DSEE is not only higher than FBDP and FBFP, but it also increases with increased temporal correlation. This counter-intuitive observation can be explained as follows. 

For a given traffic configuration, the function $g(\alpha_f, \mathbf{T}_f)$ in (\ref{eq:p_n_fix}) depends only on the temporal correlation $\alpha_f$. As the temporal correlation increases $g(\alpha_f, \mathbf{T}_f)$ reduces and higher power $P^{fix}_f$ is transmitted by the SU. While computing the transmit power $P^{fix}_f$, the underlying assumption is that the SU-1 has the latest null space to PU receiver. However, since SU is hopping to different bands in multi-band policies, it has older null space than what is assumed in the computation. This in turn increases the interference towards PU as shown in Fig. \ref{fig:all_policies_int}. The interference under DSEE is lower than in random and round robin due to the fact that the SU stays on the same band for a longer time under DSEE before exploring other bands \cite{liu2013a}. On the other hand, SU hops to different bands more frequently under round robin and random policies. This results in older null spaces at SU-1 causing significantly high interference towards PU receiver.

Note that to limit the interference below $I^0$, DBFP policies need to reduce the transmit power, which would further reduce the rate of the SU link. Therefore, the SU cannot simultaneously contain the interference and provide higher rate in DBFP policies as compared to fixed band policies as mentioned in Corollary \ref{cor:dbfp}.

\begin{figure}[t!]
	\centering		
	\includegraphics[width=0.8\columnwidth]{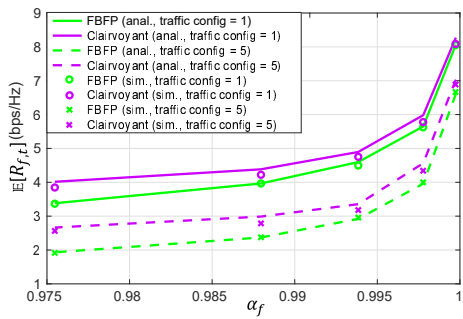}
	\caption{\small Impact of temporal correlation on the rate under FBFP and clairvoyant policies with $F=4$ bands each following same traffic configuration. $I^0/\sigma^2_w=-10$dB. $P^0/\sigma^2_w=20$dB. $M_s=4$.}
	\label{fig:clair_fbfp_alpha}
	\vspace{-2mm}
\end{figure}
\begin{figure}[t!]
	\centering		
	\includegraphics[width=0.8\columnwidth]{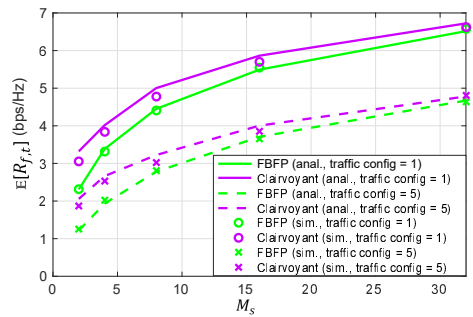}
	\caption{\small Impact of number of SU antennas on the rate under FBFP and clairvoyant policies with $F=4$ bands each following same traffic configuration. $I^0/\sigma^2_w=-10$dB. $P^0/\sigma^2_w=20$dB. $\alpha_f = 0.9755$ in each band ($f_d=50$Hz).}
	\label{fig:clair_fbfp_Ms}
	\vspace{-5mm}
\end{figure}

\subsubsection{Comparison with clairvoyant policy}
\label{sec:results_fbdp_clairvoyant}
In this section, we compare the achievable rate of clairvoyant policy and FBFP assuming all $F=4$ channels follow the same traffic configuration and temporal correlations. We consider two traffic configurations $1$ and $5$ with maximum and minimum link reversal time, respectively. It can be observed in Fig. \ref{fig:clair_fbfp_alpha} that the gain of clairvoyant policy reduces as temporal correlation increases as explained in Section \ref{sec:clair_rate}. The impact of increasing the number antennas at SUs is shown in Fig. \ref{fig:clair_fbfp_Ms}. The gain of clairvoyant policy reduces with increased number of antennas. In other words, the SU does not loose significant amount of rate even if it stays in one band. Therefore, finding empty time slots in other bands in clairvoyant policy is less beneficial in terms of increasing the rate of the SU. From these observations, we conclude that the gap between clairvoyant policy and the fixed band policy can be reduced by increasing the number of antennas at SUs.


\begin{figure}
	\centering
	\includegraphics[width=0.8\columnwidth]{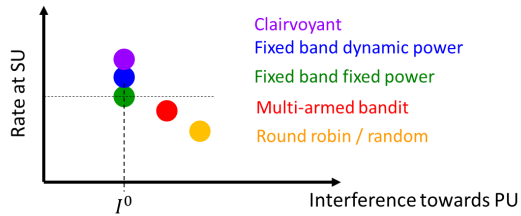}
	\caption{\small {Relative performance of policies studied in this paper.}}
	\label{fig:policy_fig_last}
	\vspace{-2mm}
\end{figure}

\section{Conclusion}
\label{sec:Conclusion}
We studied power control and frequency band selection policies for multi-band underlay MIMO cognitive radio with the objective of maximizing the rate of the SU while keeping the interference leakage towards PUs below specified level. First, we derived expressions for transmit power from SU for fixed and dynamic power control schemes. Then, the we studied the performance of band selection policies that use the proposed transmit power schemes. In the fixed band policies, we proved that the FBDP policy provides higher rate than the FBFP policy, while both policies keep the interference towards PU to the specified threshold. We showed that the DBFP policies, such as round robin, random and the DSEE policy based on multi-armed bandit framework, result in higher interference to PUs as compared to fixed band policies. In conclusion, the relative performance of the policies can be represented as shown in Fig. \ref{fig:policy_fig_last} which shows rate at SU versus interference to PU under the policies studied in this paper. As expected, the genie-aided clairvoyant policy provides maximum rate at the SU. We have provided an expression for the gap between the rate achieved in optimal clairvoyant policy and the FBFP policy. We show that the gap is reduced under slow-varying channels or as the number of SU antennas is increased. 

{
	\section {Future extension}
	\label{sec:extension}
	The band selection policies can be extended to a more general CR network where $N$ SU pairs are co-ordinated by one central node that allocates frequency bands to SU pairs. The central node can allocate bands in order to maximize the sum rate of SUs. Let us consider that SU pairs are indexed by $n$. In order to implement the fixed band policy, the central node computes the transmit power and corresponding rate $R_{n,f}$ for each SU pair $n$ and frequency band $f$ using (\ref{eq:fbfp_rate}). Then, it assigns frequency bands to the SU pairs by solving a weighted bipartite matching problem with the objective of maximizing the sum rate $\sum_{n,f}I_{n,f}R_{n,f}$, where $I_{n,f}$ is a binary assignment variable. Once the optimal assignment is determined, the SU pairs can stay on the assigned band and utilize the proposed power control scheme to transmit their signals while limiting the interference to PUs. The problem of allocating frequency bands to multiple SU pairs can also be formulated as a resource allocation problem by extending the work in \cite{marques2012}.
}

\appendices

\section{Estimation of PU link state}
\label{app:st_est}
{SU-1 and SU-2 can independently estimate $s_{f,t}$ based on the received signals during the sensing duration. Here, we describe estimation at SU-1. Whether the PU link is active, i.e. $s_{f,t}\in \{1,2\}$ or inactive, i.e. $s_{f,t}=0$, is identified by energy detection \cite{soltanmohammadi2013}. In order to identify $s_{f,t}=1$ from $s_{f,t}=2$, the signal received at SU-1 are classified using a hypothesis test. Consider that SU-1 computes the received covariance $\mathbf{\hat{Q}}_{1,f,t-\tau}$ and null space $\mathbf{A}_{1,f,t-\tau}$ in slot $t-\tau$ and labels\footnote{The labels 1 and 2 of the PU state can be reversed without affecting the operation.} the PU state as $s_{f,t-\tau}=1$. Then in slot $t$, SU-1 computes the covariance matrix $\mathbf{\hat{Q}}_{1,f,t}$. In order to determine whether $s_{f,t}=1$ or $s_{f,t}=2$, SU-1 runs a binary hypothesis test: $\mathcal{H}_1$ indicates $s_{f,t}=1$ and $\mathcal{H}_2$ indicates  $s_{f,t}=2$. For the test, SU-1 uses the signal power received in the previously computed null space $P_{null} = \text{Tr}(\mathbf{A}_{1,f,t-\tau}^H \mathbf{\hat{Q}}_{1,f,t}\mathbf{A}_{1,f,t-\tau})$ \cite{chaudhari2017}. The hypothesis test is described as:

{\small \begin{align}
	P_{null} =  \text{Tr}(\mathbf{A}_{1,f,t-\tau}^H \mathbf{\hat{Q}}_{1,f,t}\mathbf{A}_{1,f,t-\tau}) \mathop{\gtreqless}_{\mathcal{H}_1}^{\mathcal{H}_2} P_{th},
	\label{eq:hypothesis}
\end{align}}

\vspace{-4mm}
\noindent where $P_{null}$ is the component of estimated received power $\text{Tr}(\mathbf{\hat{Q}}_{1,f,t})$ in the subspace spanned by columns of $\mathbf{A}_{1,f,t-\tau}$. Under $\mathcal{H}_1$, the asymptotic estimate of the received power is given as 

{\small \begin{align}
\nonumber \text{Tr}(\mathbf{{Q}}_{1,f,t})& = \text{Tr}\left(\mathbf{G}_{11,f,t} \mathbb{E}\left[\mathbf{x}_1(n)\mathbf{x}^H_1(n)\right]\mathbf{G}^H_{11,f,t} +\sigma^2_w \mathbf{I}\right) \\&= P_{1,x} \text{Tr}\left(\mathbf{G}_{11,f,t} \mathbf{G}^H_{11,f,t}\right) +M_s \sigma^2_w,
\end{align}}

\vspace{-2mm}
\noindent
where $P_{1,x}$ is the transmit power from PU-1 and $\mathbb{E}\left[\mathbf{x}_1(n)\mathbf{x}^H_1(n)\right] = P_{1,x} \mathbf{I}$. However, since SU has only non-asymptotic estimate, the estimated received power $\text{Tr}(\mathbf{\hat{Q}}_{1,f,t})$ is modeled as a Gaussian random variable with mean $\mu = \text{Tr}(\mathbf{{Q}}_{1,f,t}) = M_s\sigma^2_w (\text{SNR}  + 1)$ and variance $\sigma^2 = \frac{1}{N}\left(\mu^2 + \sigma^4_w\right)$, where $\text{SNR} = \frac{ P_{1,x} \text{Tr}\left(\mathbf{G}_{11,f,t} \mathbf{G}^H_{11,f,t}\right)}{M_s \sigma^2_w}$ \cite[Eq. (12)]{laghate2017}. Further, using the Gauss-Markov model

{\small \begin{align}
\mathbf{{G}}_{11,f,t} = \alpha_f^\tau \mathbf{{G}}_{11,f,t-\tau} + \sqrt{1-\alpha_f^2} \sum_{\tau' = 0}^{\tau-1} \alpha_f^{\tau - \tau' - 1} \Delta \mathbf{{G}}_{11,f,t-\tau'}
\end{align}}

\vspace{-2mm}
\noindent and the definition of correlation in channel vectors in \cite[Eq. (8)]{choi2005}, we can see that the columns of $\mathbf{{G}}_{11,f,t}$ and $\mathbf{{G}}_{11,f,t-\tau}$ are correlated with correlation $\alpha_f^{2\tau}$. As shown in \cite[Fig. 2]{choi2005}, the correlation in channel vectors results in the same correlation in eigenvectors if $\alpha_f^{2\tau}\geq 0.7$. Therefore, we get

{\small\begin{align}
\nonumber \frac{\mathbb{E}[|\mathbf{{G}}_{11,f,t}(:,k)\mathbf{{G}}^H_{11,f,t-\tau}(:,k)|^2]}{\mathbb{E}\left[||\mathbf{{G}}_{11,f,t}(:,k)||^2\right] \mathbb{E}\left[||\mathbf{{G}}_{11,f,t-\tau}(:,k)||^2\right]} ~~~~~~~~~~~~~~~~~~~\\~~~~~~~~~~~~= \mathbb{E}[|\mathbf{{A}}_{1,f,t}(:,l)\mathbf{{A}}^H_{1,f,t-\tau}(:,l)|^2] =\alpha_f^{2\tau}, 
\end{align}}

\vspace{-1mm}
\noindent where $k \in \{1,2\cdots,M_p\}, l\in\{1,2,\cdots,M_s-M_p\}$, and $\mathbf{A}(:,l)$ is the $l$th column of matrix $\mathbf{A}$. Further, using the orthogonality between $\mathbf{{A}}_{1,f,t-\tau}$ and $\mathbf{{G}}_{1,f,t-\tau}$ and the Guass-Markov model, we can write the correlation between $\mathbf{{A}}_{1,f,t-\tau}$ and $\mathbf{{G}}_{1,f,t}$ as follows:

{\small\begin{align}
\nonumber \mathbb{E}[|\mathbf{{G}}_{11,f,t}(:,k)\mathbf{{A}}^H_{1,f,t-\tau}(:,l)|^2] &=  (1-\alpha_f^{2\tau})\mathbb{E}\left[||\mathbf{{G}}_{11,f,t}(:,k)||^2\right] 
\end{align}}

\vspace{-2mm}
\noindent Therefore, the mean of $P_{null}$ under $\mathcal{H}_1$ is as follows:
{\begin{align}
\nonumber \mu_{P }&=\mathbb{E}[P_{null}\vert \mathcal{H}_1] = \text{Tr}(\mathbf{A}_{1,f,t-\tau}^H \mathbf{Q}_{1,f,t}\mathbf{A}_{1,f,t-\tau}) 
\\\nonumber &= (1-\alpha_f^{2\tau})P_{1,x}\text{Tr}\left(\mathbf{G}_{11,f,t} \mathbf{G}^H_{11,f,t}\right) + M_s \sigma^2_w \\
&= (1-\alpha_f^{2\tau})\mu + \alpha_f^{2\tau}M_s \sigma^2_w.
\label{eq:mean_P_null_H1}
\end{align}}

\vspace{-1mm}
\noindent  The variance due to non-asymptotic estimation is expressed as in \cite[Eq.8]{chaudhari2017}:

{\small \begin{align}
\sigma^2_{P} = \text{Var}({P_{null}}\vert \mathcal{H}_1) = (1-\alpha_f^{2\tau})^2 \sigma^2 = \frac{(1-\alpha_f^{2\tau})^2}{N}(\mu^2 + \sigma^4_w).
\label{eq:var_P_null_H1}
\end{align}}

\vspace{-1mm}
The probability of miss-classifying $s_{f,t}=2$ when $s_{f,t}=1$ is $p_{m}=Q\left(\frac{P_{th}- \mu_P}{\sigma_P}\right)$, where $Q(.)$ is the Q-function. For a fixed $p_m$, the threshold can be set as $P_{th} = Q^{-1}(p_m)\sigma_P + \mu_P$ and the estimate of $\mu_P$ and $\sigma_P$ are computed by replacing $\mu= \text{Tr}(\mathbf{Q}_{1,f,t})$ in (\ref{eq:mean_P_null_H1}) with $\text{Tr}(\mathbf{\hat{Q}}_{1,f,t})$. 

{Under $\mathcal{H}_2$, we have $\mathbf{Q}_{1,f,t} = P_{2,x} \mathbf{G}_{21,f,t}\mathbf{G}^H_{21,f,t} + \sigma^2_w  \mathbf{I}$, where $P_{2,x}$ is the transmit power from PU-2 and SNR is $\frac{P_{2,x} \text{Tr}\left(\mathbf{G}_{21,f,t} \mathbf{G}^H_{21,f,t}\right)}{{M_s \sigma^2_w}}$.	Since the columns of $\mathbf{A}_{1,f,t-\tau}$ and $\mathbf{G}_{21,f,t}$ are uncorrelated, $P_{null}$ is a gamma random variable with shape parameter $\kappa$ and scale parameter $\theta$ given as follows:

{\small \begin{align}
\kappa = \frac{M_s M_p \left(P_{2,x} + \sigma^2_w +\sigma^2_w/N \right)^2}{P^2_{2,x} + \sigma^4_w +\sigma^4_w/N^2},~
\theta = \frac{P^2_{2,x} + \sigma^4_w +\sigma^4_w/N^2}{P_{2,x} + \sigma^2_w +\sigma^2_w/N},
\label{eq:P_null_H2}
\end{align}}

\vspace{-1mm}
\noindent The above follows from \cite[Lemma 2 and 3]{hosseini2014}. Thus, for a fixed threshold $P_{th}$, the probability of error in state estimation is given as 	

{\small \begin{align}
\nonumber p_e &=  \pi_{1,f} \Pr\left(P_{null} > P_{th} \vert \mathcal{H}_1\right) + \pi_{2,f} \Pr\left(P_{null} \leq P_{th} \vert \mathcal{H}_2\right )\\
		&= \pi_{1,f} Q\left(\frac{P_{th} - \mu_{P}}{\sigma_{P }}\right) + \pi_{2,f} \mathcal{F}(\kappa,\theta, P_{th}),
		\label{eq:state_est_error}
\end{align}}

\noindent where $\mathcal{F}(\kappa,\theta, P_{th}) = \gamma\left(\kappa, \frac{P_{th}}{\theta}\right) / \Gamma(\kappa)$ is the CDF of $P_{null}$  under $\mathcal{H}_2$.
	
If there is an error in the state estimation, SU-1 utilizes incorrect null space for precoding, which results in higher interference leakage to PU receiver. For example, if $s_{f,t}=1$ and estimated state is $s_{f,t}=2$, then the expected interference to PU receiver is $\mathbb{E}[P_t ||\mathbf{G}_{12,f,t}^H \mathbf{v}_t||^2]=P_t M_p$. Therefore, the expected interference to PU receiver considering the state estimation error is

\begin{align}
\mathbb{E}[I_{f,t}] = (1-p_e) P_t M_p (1-\alpha_f^{2\tau}) + p_e P_t M_p,
\label{eq:avg_int_v2}
\end{align}

where the first term follows from the discussion in Section \ref{sec:interference_to_pu}. For $p_e \ll \frac{1-\alpha_f^{2\tau}}{2-\alpha_f^{2\tau}}$, we can ignore the impact of error in state estimation and approximate (\ref{eq:avg_int_v2}) with (\ref{eq:avg_int}). For example, $p_e \leq  0.01\left( \frac{1-\alpha_f^{2\tau}}{2-\alpha_f^{2\tau}} \right)$ holds for the 7 traffic configurations of PU link considered in this paper under SNR $\geq 3$dB and $\tau \leq 10$ at $M_s=4$, $M_p=1$, $\alpha_f\in [0.9755,0.9999]$ if the threshold is set as $P_{th} = Q^{-1}(10^{-4})\mu_P + \sigma_P$ for $p_m=10^{-4}$. Further, increasing $M_s$ and $M_p$ reduces $p_e$.  This is due to the fact that the shape parameter $\kappa$ of $P_{null}$ under $\mathcal{H}_2$ is proportional to $M_s M_p$ as shown in  (\ref{eq:P_null_H2}). Increasing $M_s$ or $M_p$ increases the mean of $P_{null}$ under $\mathcal{H}_2$ which results in lower probability of error.


\section{Proof: $\mathbb{E}[P_t ||\mathbf{G}^H_{11,f,t}\mathbf{v}_t||^2] = P_t M_p (1-\alpha_f^{2\tau})$}
\label{app:avg_int}
Using the Gauss-Markov model, the relationship between $\mathbf{G}^H_{11,f,t}$ and $\mathbf{G}^H_{11,f,t-\tau}$ can be expressed as follows:

{\small\begin{align}
\mathbf{G}^H_{11,f,t} = \alpha_f^\tau \mathbf{G}^H_{11,f,t-\tau} + \sqrt{1-\alpha_f^2} \sum_{\tau' = 0}^{\tau-1} \alpha_f^{\tau - \tau' - 1} \Delta \mathbf{G}^H_{11,f,t-\tau'}.
\end{align}}

\noindent The beamforming vector $\mathbf{v}_t$ is in the null space of $\mathbf{G}^H_{11,f,t-\tau}$, i.e., $\mathbf{G}^H_{11,f,t-\tau} \mathbf{v}_t =0$. Therefore, 

{\small\begin{align}
\nonumber \mathbb{E} \left[ P_t||\mathbf{G}^H_{11,f,t} \mathbf{v}_t||^2 \right]~~~~~~~~~~~~~~~~~~~~~~~~~~~~~~~~~~~~~~~~~~~~~~\\ =
P_t (1-\alpha_f^2) \mathbb{E}\left[||\sum_{\tau' = 0}^{\tau-1} \alpha_f^{\tau - \tau' - 1} \Delta \mathbf{G}^H_{11,f,t-\tau'} \mathbf{v}_t ||^2\right].
\end{align}}

\vspace{-2mm}
\noindent Using the fact that channel evolutions $\Delta \mathbf{G}^H_{11,f,t-\tau'}\sim \mathcal{CN}(0,\mathbf{I})$ are i.i.d., and $\mathbf{v}_t$ is a unit-norm vector, the above equation reduces to
{\small \begin{align}
\nonumber \mathbb{E} \left[ P_t||\mathbf{G}^H_{11,f,t} \mathbf{v}_t||^2 \right] = P_t M_p(1-\alpha_f^2)  \sum_{\tau' = 0}^{\tau-1} \alpha_f^{2(\tau - \tau' - 1)} \\= P_t M_p (1-\alpha_f^{2\tau}),
\end{align}}

\vspace{-5mm}
\noindent where $M_p$ is the rank of $\mathbf{G}_{11,f,t}$.

\section{Expression for $P^{fix}_f = \frac{I^0}{M_p\mathbb{E}_\tau [1-\alpha_f^{2\tau}]}$ }
\label{app:p_n_fix}
Let us define $g(\alpha_f, \mathbf{T}_f) = \mathbb{E}_\tau [1-\alpha_f^{2\tau}]$. The expectation can be written as

{\small \begin{align}
\nonumber g(\alpha_f, \mathbf{T}_f)  =& \sum_{i \in \mathbb{N}} (1 -\alpha_f^{2i }) \times \Pr(\tau=i \vert s_{f,t}=\{1,2\})
\\\nonumber  =& \sum_{i \in \mathbb{N}} (1 -\alpha_f^{2i }) \left[ \frac{\pi_{1,f}}{\pi_{1,f} + \pi_{2,f}} \Pr(\tau=i \vert s_{f,t}=1)\right]
\\&+ \sum_{i \in \mathbb{N}} (1 -\alpha_f^{2i }) \left[ \frac{\pi_{2,f}}{\pi_{1,f} + \pi_{2,f}}  \Pr(\tau=i \vert s_{f,t}=2)\right]
\label{eq:f1}
\end{align}}

\vspace{-2mm}
\noindent where $ \mathbb{N}$ is a set of natural numbers and $\Pr(\tau=i | s_{f,t}=1)$ is the probability that the null space of $i$ slots old when $s_{f,t}=1$. This probability can be written as follows:
{\small \begin{align}
\nonumber \Pr(\tau=i\vert s_{f,t}=1) 
&= \frac{\pi_{2,f}}{\pi_{1,f}} \Pr(s_t=1,\cdots,s_{t-(i-1)}\neq 2 | s_{t-i}=2)
\\&=\frac{\pi_{2,f}}{\pi_{1,f}} \sum_{s\in \{0,1\}} p_{2s}p_{s1\backslash 2}^{(i-1)}.
\label{eq:tau_st_1}
\end{align}}

\vspace{-1mm}
\noindent where $p_{ss' \backslash s''}^{(i)}$ is the probability of PU link going from state $s$ to state $s'$ in $i$ slots without hitting state $s''$. Similarly, 
\begin{align}
\Pr(\tau=i\vert s_{f,t}=2) = \frac{\pi_{1,f}}{\pi_{2,f}}  \sum_{s\in \{0,2\}} p_{1s}p_{s2\backslash 1}^{(i-1)},
\label{eq:tau_st_2}
\end{align}
Substituting (\ref{eq:tau_st_2}) and (\ref{eq:tau_st_1}) in  (\ref{eq:f1}) and then in (\ref{eq:p_n_fix_def}), we get the required expression in (\ref{eq:p_n_fix}).

\section{Proof of theorem \ref{thm:fbdp_v_fbfp}}
\label{app:fbdp_v_fbfp}
Let us consider that SU stays on frequency band $f$ under fixed and dynamic power policies. The expected rates under the two policies can be written as follows:

{\footnotesize
\begin{align}
\nonumber \mathbb{E}[R^{(1)}_{f,t}] =  (1- \pi_{0,f}) \mathbb{E}[R^{(1)}_{f,t} | s_{f,t}=\{1,2\}] + \pi_{0,f} \mathbb{E}[R^{(1)}_{f,t} | s_{f,t}=\{0\}],
\\\nonumber  \mathbb{E}[R^{(2)}_{f,t}] =  (1- \pi_{0,f}) \mathbb{E}[R^{(2)}_{f,t} | s_{f,t}=\{1,2\}] + \pi_{0,f} \mathbb{E}[R^{(2)}_{f,t} | s_{f,t}=\{0\}].
\end{align}}

\noindent Since both SU are on the same band and it transmits same transmit power $P^0$ under the two policies when PU is silent, i.e., $s_{f,t}=0$, we have $\mathbb{E}[R^{(1)}_{f,t} | s_{f,t}=\{0\}]= \mathbb{E}[R^{(2)}_{f,t} | s_{f,t}=\{0\}]$. Therefore, to prove  $\mathbb{E}[R^{(2)}_{f,t}]\geq \mathbb{E}[R^{(1)}_{f,t}]$, it is sufficient to prove that $\mathbb{E}[R^{(2)}_{f,t} \vert s_{f,t}=\{1,2\}]  \geq \mathbb{E}[R^{(1)}_{f,t} \vert s_{f,t}=\{1,2\}] $. Equivalently, we need to prove that

{\small \begin{align}
\nonumber  \mathbb{E}\left[\log_2\left(1 + \frac{I^0}{M_p(1-\alpha_f^{2\tau})} {\Gamma_{f,t}}\right) \vert s_{f,t}=\{1,2\} \right] 
~~~~~~~~~~\\~~\geq \mathbb{E}_{\Gamma}\left[\log_2\left(1 + \frac{I^0}{M_p\mathbb{E}_\tau [1-\alpha_f^{2\tau}]} {\Gamma_{f,t}}\right) \vert s_{f,t}=\{1,2\} \right].
\label{eq:r2_v_r1}
\end{align}}

\noindent The expectation in the LHS can be split in terms of expectation with respect to $\tau$ and $\Gamma$ as:
{\small \begin{align}
\nonumber \mathbb{E}\left[\log_2\left(1 + \frac{I^0}{M_p(1-\alpha_f^{2\tau})} {\Gamma}\right) \vert s_{f,t}=\{1,2\} \right] 
~~~~~~~~~~~~~~~\\ =
 \mathbb{E}_{\Gamma}\left[ \mathbb{E}_{\tau} \left[\log_2\left(1 + \frac{I^0}{M_p(1-\alpha_f^{2\tau})} {\Gamma}\right) \vert s_{f,t}=\{1,2\} \right] \right].
\end{align}}

\noindent For simplicity of notations, we have dropped suffixes from $\Gamma$. In order to prove the inequality in (\ref{eq:r2_v_r1}), it is sufficient to prove that for any given value of $\Gamma_{f,t}$ the following holds:
{\small \begin{align}
\nonumber \mathbb{E}_{\tau} \left[\log_2\left(1 + \frac{I^0}{M_p(1-\alpha_f^{2\tau})} {\Gamma}\right) \vert s_{f,t}=\{1,2\} \right]~~~~~~~\\\geq \log_2\left(1 + \frac{I^0}{M_p\mathbb{E}_\tau [1-\alpha_f^{2\tau}]}{\Gamma}\right).
\label{eq:eqvt}
\end{align}}

\noindent Using Bayes' rule, the term in the LHS can be written as:
{\small
\begin{align}
\nonumber \mathbb{E}_{\tau}\left[\log_2\left(1 + \frac{I^0}{M_p(1-\alpha_f^{2\tau})} {\Gamma}\right) \vert s_{f,t}=\{1,2\} \right] 
~~~~~~~~~~~~~ \\\nonumber = \frac{\pi_{1,f}}{\pi_{1,f} + \pi_{2,f}} \mathbb{E}\left[\log_2\left(1 + \frac{I^0}{M_p(1-\alpha_f^{2\tau})} {\Gamma}\right) \vert s_{f,t}=\{1\} \right]  
\\ + \frac{\pi_{2,f}}{\pi_{1,f} + \pi_{2,f}} \mathbb{E}\left[\log_2\left(1 + \frac{I^0}{M_p(1-\alpha_f^{2\tau})} {\Gamma}\right) \vert s_{f,t}=\{2\} \right].
\label{eq:eqvt2}
\end{align}}

\vspace{-1mm}
Let us define $a = \frac{I^0}{M_p(1-\alpha_f^{2\tau})} | s_{f,t}=1$, i.e., $a$ is a random variable with value $\frac{I^0}{M_p(1-\alpha_f^{2\tau})}$ when $s_{f,t}=1$. Similarly, let 
$b = \frac{I^0}{M_p(1-\alpha_f^{2\tau})} | s_{f,t}=2$. Note that $\Gamma$ follows the same distribution under $s_{f,t}=1$ and $s_{f,t}=2$. Therefore, we can write (\ref{eq:eqvt2}) as follows

{\small \begin{align}
\nonumber \mathbb{E}_{\tau} &\left[\log_2\left(1 + \frac{I^0}{M_p(1-\alpha_f^{2\tau})} {\Gamma}\right) \vert s_{f,t}=\{1,2\} \right] \\&= \frac{\pi_{1,f}}{\pi_{1,f} + \pi_{2,f}} \mathbb{E}_{\tau}\left[\log_2(1+a \Gamma)\right] + \frac{\pi_{2,f}}{\pi_{1,f} + \pi_{2,f}} \mathbb{E}_{\tau} \left[\log_2(1+b \Gamma)\right].
\label{eq:eqvt3}
\end{align}}

\vspace{-1mm}
\noindent Using the property of log function, we have $\frac{\pi_{1,f}}{\pi_{1,f} + \pi_{2,f}} \mathbb{E}_{\tau}\left[\log_2(1+a \Gamma)\right] \geq -\frac{\pi_{1,f}}{\pi_{1,f} + \pi_{2,f}} \log_2 \mathbb{E}_\tau(\frac{1}{1+a\Gamma})$ and $\frac{\pi_{2,f}}{\pi_{1,f} + \pi_{2,f}} \mathbb{E}_{\tau}\left[\log_2(1+b \Gamma)\right] \geq -\frac{\pi_{2,f}}{\pi_{1,f} + \pi_{2,f}} \log_2 \mathbb{E}_\tau(\frac{1}{1+b\Gamma})$.

Now let us define $c = \frac{I^0}{M_p\mathbb{E}_{\tau}[1-\alpha_f^{2\tau}]}$. The term in the RHS of (\ref{eq:eqvt}) is then 
\begin{align}
\log_2\left(1 + \frac{I^0}{M_p\mathbb{E}_\tau [1-\alpha_f^{2\tau}]} {\Gamma}\right) = \log_2(1+ c\Gamma).
\label{eq:eqvt4}
\end{align}
Therefore, using (\ref{eq:eqvt3}), (\ref{eq:eqvt4}), we can say that to prove (\ref{eq:eqvt}) hold for any value of $\Gamma$, it is sufficient to prove the following:
{\small \begin{align}
\nonumber -\frac{\pi_{1,f}}{\pi_{1,f} + \pi_{2,f}} \log_2 \left(\mathbb{E}_\tau \left[\frac{1}{1+a\Gamma}\right] \right) 
~~~~~~~~~~~~~~~~~~~~~~~~~~\\ -\frac{\pi_{2,f}}{\pi_{1,f} + \pi_{2,f}} \log_2 \mathbb{E}_\tau \left( \left[\frac{1}{1+b\Gamma} \right]\right) \geq \log_2 (1 + c\Gamma),
\end{align}}

\noindent or equivalently
{\small \begin{align}
\nonumber  \log_2 \left[ \left(\mathbb{E}_\tau \left[\frac{1}{1+a\Gamma}\right] \right)^{-\pi_{1,f}} \left(\mathbb{E}_\tau \left[\frac{1}{1+b\Gamma}\right] \right)^{-\pi_{2,f}}  \right] \\\geq  \log_2\left[ (1 + c\Gamma)^{{\pi_{1,f} + \pi_{2,f} }}\right].
\end{align}}

\vspace{-1mm} \noindent Since, logarithm is a monotonically increasing function, it is sufficient to prove that
{\small \begin{align}
\nonumber \left(\mathbb{E}_\tau \left[\frac{1}{1+a\Gamma}\right] \right)^{-\pi_{1,f}}  \left(\mathbb{E}_\tau \left[\frac{1}{1+b\Gamma}\right] \right)^{-\pi_{2,f}} ~~~~~~~~~~~~\\\geq   (1 + c\Gamma)^{\pi_{1,f} } (1 + c\Gamma)^{\pi_{2,f} },
\end{align}}

\vspace{-1mm}
\noindent or equivalently
{\small \begin{align}
1 \geq   \left[(1 + c\Gamma)\left(\mathbb{E}\frac{1}{1+a\Gamma}\right)\right]^{\pi_{1,f} }   \left[(1 + c\Gamma)\left(\mathbb{E}\frac{1}{1+b\Gamma}\right)\right] ^{\pi_{2,f} }.
\label{eq:eqvt5}
\end{align}}

\vspace{-1mm}
\noindent Since $0\leq \pi_{1,f}, \pi_{2,f} \leq 1$, the above inequality holds if each term in the square bracket is $\leq 1$. To prove that is the case, we first express the relationship between random variables $a,b$ and $c$ is as $c =  \frac{\pi_{1,f} + \pi_{2,f} }{\mathbb{E}[1/a] + \mathbb{E}[1/b]}$. Therefore, we have
{\begin{align}
(1 + c\Gamma)\mathbb{E}\left[\frac{1}{1+a\Gamma}\right] = \left(1 + \frac{(\pi_{1,f} + \pi_{2,f})\Gamma }{\mathbb{E}[1/a] + \mathbb{E}[1/b]} \right)\mathbb{E}\left[\frac{1}{1+a\Gamma}\right]  
\end{align}}

\vspace{-1mm}
\noindent Note that $a,b,\Gamma \geq 0$ and $0 \leq \pi_{1,f} +\pi_{2,f} \leq 1$. Therefore, to prove that the above term is $\leq 1$, it is sufficient to show that
{\small\begin{align}
\Gamma \mathbb{E} \left[ \frac{1}{1+a\Gamma}\right] \leq \mathbb{E}\left[\frac{1}{a}\right] \left(1- \mathbb{E}\left[\frac{1}{1+a\Gamma}\right] \right)
\end{align}}

\vspace{-1mm}
\noindent Since $\Gamma$ is a constant in the above equation, we can re-arrange the LHS to get the following requirement for the proof:
{\small \begin{align}
\mathbb{E} \left[ \frac{1}{a} \left(\frac{a \Gamma}{1+a\Gamma } \right)\right] \leq  \mathbb{E}\left[\frac{1}{a}\right]\mathbb{E}\left[\frac{a\Gamma}{1+a\Gamma}\right]
\label{eq:eqvt6}
\end{align}}

\vspace{-1mm}
\noindent Note that for any given $\Gamma\geq 0$, the random variables $1/a$ and $a\Gamma /(1+a\Gamma) $ are negatively correlated. Therefore, (\ref{eq:eqvt6}) always holds for any value of $\Gamma$ and (\ref{eq:r2_v_r1}) is always true, which completes the required proof of $\mathbb{E}[R^{(2)}_{f,t}] \geq \mathbb{E}[R^{(1)}_{f,t}]$ (Theorem \ref{thm:fbdp_v_fbfp}).

\bibliographystyle{IEEEtran}
\bibliography{IEEEabrv,final_refs_j3}

\end{document}